\documentclass[conference,onecolumn,11pt]{IEEEtran}
\usepackage{algorithm,algorithmic}
\usepackage{url}


\usepackage{graphicx}
\usepackage{color}
\usepackage{chapterbib}
\usepackage{pdfsync}
\usepackage[hang]{subfigure}

\usepackage{times}
\usepackage{latexsym}
\usepackage{bbm,dsfont}
\usepackage{amsmath}
\usepackage{amssymb}
\usepackage{latexsym}
\usepackage{amsthm}
\usepackage{amsfonts}
\usepackage{epsfig}
\usepackage{latexsym}
\usepackage{cite}
\usepackage{graphicx}
\usepackage{amsmath}
\usepackage{amsthm}

\newcommand{\bea}{\begin{eqnarray}}
\newcommand{\eea}{\end{eqnarray}}

\usepackage{epsf}
\usepackage{amssymb}
\usepackage{graphicx}
\usepackage{amsmath}
\usepackage[english]{babel}
\usepackage{mathrsfs}






\usepackage{setspace}



\usepackage{ifthen}
\usepackage{color}
\usepackage{amssymb}

\usepackage[newenum]{paralist}

\newcommand{\IsTR}{\ifthenelse{1<2}}

\IsTR{}{ \ninept }



%
%

%


\DeclareMathOperator*{\argmax}{arg\max}  

\newcommand{\set}[1]{\mathbb{#1}}
\newtheorem{theorem}{Theorem}

\DeclareMathSizes{10}{8}{7}{5}
\DeclareMathSizes{11}{11}{7.7}{5.5}
\DeclareMathSizes{12}{12}{8.4}{6}
\DeclareMathSizes{13}{13}{9.1}{6.5}
\DeclareMathSizes{14}{14}{9.8}{7}

\begin{document}
\title{An Adaptive Multi-channel P2P Video-on-Demand System using Plug-and-Play Helpers}
\author{
Hao Zhang{$^{\dag}$}, Minghua Chen{$^{\S}$}, Abhay Parekh{$^{\dag}$} and Kannan Ramchandran{$^{\dag}$}\\

\begin{tabular}{c@{}}
       {$^{\dag}$}Department of EECS\\
       University of California at Berkeley, Berkeley, CA\\
       \{zhanghao, parekh, kannanr\}@eecs.berkeley.edu
\end{tabular} \hskip 0.15in \begin{tabular}{c@{}}
       {$^{\S}$}Department of Information Engineering\\
       The Chinese University of Hong Kong, Hong Kong\\
       minghua@ie.cuhk.edu.hk
\end{tabular}
}
\maketitle


\begin{abstract}
We present a multi-channel P2P Video-on-Demand (VoD) system using ``plug-and-play" helpers. Helpers are heterogenous ``micro-servers" with limited storage, bandwidth and number of users they can serve simultaneously. Our proposed system has the following salient features: (1) it minimizes the server load; (2) it is distributed, and requires little or no maintenance overhead and which can easily adapt to system dynamics; and (3) it is adaptable to varying supply and demand patterns across multiple video channels irrespective of video popularity. Our proposed solution jointly optimizes over helper-user topology, video storage allocation and bandwidth allocation. The combinatorial nature of the problem and the system demand for distributed algorithms makes the problem uniquely challenging. By utilizing Lagrangian decomposition and Markov chain approximation based arguments, we address this challenge by designing two distributed algorithms running in tandem: a primal-dual storage and bandwidth allocation algorithm and a ``soft-worst-neighbor-choking" topology-building algorithm. Our scheme provably converges to a near-optimal solution, and is easy to implement in practice. Simulation results validate that the proposed scheme achieves minimum sever load under highly heterogeneous combinations of supply and demand patterns, and is robust to system dynamics of user/helper churn, user/helper asynchrony, and random delays in the network.
\end{abstract}

\section{Introduction}
Our paper is motivated by the following characteristics of online video traffic:
\begin{itemize}
\item The amount of video traffic is growing exponentially~\cite{cisco,huang2007civ}: YouTube estimates that $24$ hours of video are uploaded to its site every minute; thousands of films and TV shows are available for streamed viewing from sites such as Netflix, Amazon and iTunes. Cisco projects that video will account for $60\%$ of the Internet traffic by $2013$.
\item The demand for video titles is time-varying and heavy-tailed~\cite{boufkhad2008achievable}. The need for on-demand video delivery, where users can ``channel surf'' by switching across the menu of thousands of available videos, is rapidly upsurging.
\end{itemize}
A fundamental challenge in supporting such a large and diverse on-demand infrastructure is the degree to which this infrastructure can be distributed and maintained at low cost. It seems clear that architectures biased towards centralized distribution are not likely to be scalable as video adoption grows. YouTube, for example, pays millions of dollars per month only on bandwidth costs~\cite{huang2007civ}. Peer-to-Peer (P2P) systems save cost and adjust load automatically, yet they do not provide acceptable quality of service (QoS) for the viewers of all but the most popular videos.

To match the heavy-tailed demand patterns at low maintenance cost, researchers and engineers have introduced the concept of helpers and explored the design of helper-assisted P2P VoD systems. Helper nodes are ``micro-servers," which, in the system scale, have only limited individual resources of storage and bandwidth for the video streaming service. In the PPStream system for example~\cite{ppstream}, each peer dedicates about $1$ GB of its local storage to cache previously watched videos and helps serve users, therefore reducing the load on the central server. The concept of helpers has also been used in other P2P streaming applications including Xunlei~\cite{xunlei} and PPLive~\cite{pplive}. In this paper, we envision an ecosystem in which a variety of devices, including set top boxes, inexpensive PCs and small servers such as CDNs, are incentivized to participate in an economy of helpers.

Minimizing the server load in a helper-assisted P2P VoD system is a challenging problem under practical constraints. First, the extreme range of videos and the sheer amount of content makes it impractical to store and serve every video on any individual helper. Given a number of distributed helper nodes with limited resources of storage and bandwidth in the system scale, it is important to answer the questions of ``what fraction of what videos should be stored on each helper to optimize overall system demand patterns" and ``how much bandwidth should helpers allocate to each of their requesting users?" Second, due to practical connection overheads, there are limits on how many users each helper can simultaneously connect to and vice versa. Therefore, an important question is ``how should we build optimal helper-user overlay topology?" Since the number of helper-user topology configurations is exponential in the number of nodes, topology building is a challenging combinatorial problem. Third, video popularity is time-varying, and helper/user may randomly join and leave the system, making it difficult to keep track of the supply and demand patterns in real time. It is desirable that the system can adapt to these fluctuations with minimum overhead.

In this paper, we present a helper-assisted P2P VoD system to solve these challenges. Our system has the following distinguishing attributes:
\begin{itemize}
\item It minimizes the server load. To do this, we jointly optimize over helper's video storage and bandwidth allocations, and helper-user topology. By utilizing Lagrangian decomposition and Markov chain approximation based arguments, we design two distributed algorithms running them in tandem: a primal-dual resource allocation algorithm and a ``soft-worst-neighbor-choking'' topology-building algorithm.
\item The proposed algorithm is fully distributed and easy to implement. Peers can dynamically optimize system resources by taking actions based on only local information yet being able to achieve global optimality with provable convergence. Helpers are ``plug-and-play'', i.e., a newly deployed helper will automatically connect to a set of interested users and load balance its storage and bandwidth according to the up-to-date supply and demand patterns with minimum maintenance requirement.
\item Thanks to the simple distributed solution, our proposed system is adaptable to varying supply and demand patterns across multiple video channels irrespective of video popularity. Our system achieves minimum sever load under highly heterogeneous combinations of supply and demand patterns, and is robust to system dynamics of user/helper churn, user/helper asynchrony, and random delays in the network.
\end{itemize}
Simulation results validate the feasibility and effectiveness of the proposed algorithms and offer new insights into building practical P2P VoD systems.

\section{Related Work}
Distributed bandwidth allocation for P2P streaming was studied in several works~\cite{wu2008meeting,wang08acmmm}. Wu and Li proposed a rate allocation scheme for a single-video P2P live streaming application without helpers and showed its convergence and optimality~\cite{wu2008meeting}. Wang et al. proposed a distributed solution to minimize the weighted sum of server load and non-ISP-friendly traffic~\cite{wang08acmmm} under a single video scenario. The problem of being able to switch streams (channels) has been studied in the live P2P streaming case by Wu et al.~\cite{wu2009queuing} who propose the concept of view-coupling~\cite{wang2008epp,zhang2009icip}. When there is storage constraint, Huang and et al.~\cite{huang2008challenges} suggested using the proportional replication strategy, i.e., replicating video storage in the system proportionally to their demand. However, this ignores the available system bandwidth for the videos, and can result in poor performance for videos with low demand as was observed by Wang and Lin~\cite{wang2009}. Optimal multi-channel on-demand solutions with both bandwidth and storage constraints are challenging primarily because it is difficult to effectively aggregate instantaneous user demand and keep track of available system resources distributively.

The concept of helpers has been addressed by several authors~\cite{wang2008epp,zhang2009icip,he2009improving}. Wang et al. studied a single-video helper-assisted P2P live streaming scenario and proposed that each helper downloads only one coded packet of the segment that is currently being streamed. Simulation results showed that the proposed system can achieve significantly improved streaming bitrate without incurring additional server load. Zhang et al.~\cite{he2009improving} and He et al.~\cite{zhang2009icip} individually proposed similar concepts of using helpers in a single-video P2P VoD application to boost system performance. He et al. also proposed a distributed algorithm to allocate bandwidth among helpers and users assuming a given helper assignment and a fixed helper-user connection topology. However, all of these works focus on a single video scenario without considering helpers' storage constraints imposed by the sheer amount of aggregate video volume when tackling the common scenario having an arbitrarily large number of videos. Furthermore, fixed overlay topology was assumed and was not optimized for the overall system performance. BitTorrent~\cite{cohen2003incentives} uses a ``worst-neighbor-choking" algorithm to update overlay topology, in which users periodically connects to a new and randomly chosen neighbor and chokes the worst performing neighbor. In contrast, we propose a ``soft-worst-neighbor-choking" algorithm and prove its optimality.

There are a number of works on practical P2P VoD system design. Annapureddy and et al. proposed a P2P VoD system called Redcarpet~\cite{annapureddy2006providing}. The authors proposed an efficient video block dissemination algorithm in a mesh-based P2P system, and showed that pre-fetching and network coding techniques can greatly improve system performance. Simulation results showed their system can achieve small start-up time and smooth video playback. Huang et al. studied~\cite{huang2008challenges} the challenges and the architectural design issues of a large-scale P2P-VoD system based on the experiences of a real system deployed by PPLive. Such challenges include coordinating content storage distribution, content discovery, and peer scheduling. There are also a number measurement studies of practical P2P VoD systems~\cite{hei2007measurement,huang2007civ}.

Our work contributes to the VoD literature in several respects: First, we take practical considerations of bounded user/helper connections and propose a distributed algorithm that optimizes the overlay topology building. Second, we target the problem of multi-channel helper-assisted P2P VoD system with both storage and bandwidth constraints and jointly optimize their allocation. Third, our distributed solution is provided with provable analytics that allows for videos to be served efficiently and with effective response time irrespective of video popularity. Our system is ``plug-and-play", and can easily handle user/helper dynamics and video demand pattern changes and it does so with minimal central co-ordination.

\section{Problem Setup}
\label{sec:problem}
We first formulate the problem of minimizing the server load in a static helper-assisted VoD system, and then design distributed solutions that allows the system to be adaptive to varying demand patterns in dynamic situations with minimum maintenance overhead. Table~\ref{tab:notation} lists the relevant notations.

\subsection{Problem Overview and Assumptions}
\label{sec:assump}
Consider a VoD system where $M$ videos are served by a dedicated central server and a group of helpers. The dedicated central server fills in any required system deficit to guarantee the real-time streaming requirements of all the users. Each video $m$ has constant streaming rate $r_m$, duration $l_m$ and size $V_m=r_ml_m$, $m=1,2,\ldots,M$. There are $I_m$ users in each video session $m$, and every user $i_m$ watches only one video at a time. Consider $J$ helpers in the system, and each helper $j$ has storage capacity $S_j$ and upload bandwidth capacity $B_j$, $j=1,2,\ldots,J$.

We assume that the download bandwidths do not represent a system bottleneck, as is true in typical Internet access scenarios today. We also assume decoupled roles of users and helpers, i.e., users can only request service and helpers can only provide service. In the case that users can also help redistribute the video content, one can still conceptually separate their roles as service-request and service-offering entities, as proposed by Zhang et al.~\cite{zhang2009icip} and Wu et al.~\cite{wu2009queuing}. We will refer to both users and helpers as ``peers''.

Due to practical limits in connection overhead, we consider that each user $i_{m}$ (helper $j$) can \emph{simultaneously} connect to at most $N_{i_{m}}^{\max}$ ($N_{j}^{\max}$) neighbor nodes from its candidate neighbor set $\set{N}_{i_{m}}$ ($\set{N}_{j}$). Denote by set $\set{C}$ the entire possible helper-user topology configurations, where each configuration $c\in\set{C}$ is a set of links which connect all the users and helpers and which satisfy the bounded-neighbors constraints. Denote by $\set{N}_{i_{m}}^{c}$ ( $\set{N}_{j,m}^{c}$) the set of active helpers (users) that user $i_{m}$ (helper $j$) connects to under configuration $c$. Helper $j$'s upload rate to user $i_{m}\in\set{N}_{j,m}^{c}$ under $c$ is denoted by $x_{ji_{m}}^{c}$. For simplicity, we will
drop the superscript $c$ in $x_{ji_{m}}^{c}$ and use $x_{ji_{m}}$ instead, but it should be easy to clarify based on the context.
\begin{table}[t!]
{\centering \caption{Key Notations}
\label{tab:notation} \vspace{-0.1in}
 \tabcolsep 5.8pt {\small }\begin{tabular}{l|l}
\hline
\textbf{\small Notation}{\small{} } & \textbf{\small Definition}\tabularnewline
\hline
{\small $M$ } & {\small total number of videos }\tabularnewline
{\small $I_{m}$ } & {\small total number of users watching video $m$}\tabularnewline
{\small $J$ } & {\small total number of helpers }\tabularnewline
{\small $r_{m},l_{m},V_{m}$ } & {\small video $m$'s streaming rate, duration and size }\tabularnewline
{\small $\set{C}$ } & {\small set of feasible overlay configurations}\tabularnewline
{\small $\set{N}_{i_{m}}$ } & {\small user $i_{m}$'s helper neighborhood}\tabularnewline
{\small $\set{N}_{j}$ } & {\small helper $j$'s user neighborhood }\tabularnewline
{\small $\set{N}_{i_{m}}^{c}$ } & {\small set of helpers connected to user $i_{m}$ under $c$}\tabularnewline
{\small $\set{N}_{j,m}^{c}$ } & {\small set of users connected to helper $j$ under $c$}\tabularnewline
{\small $B_{j},S_{j}$ } & {\small upload, storage capacity of helper $j$}\tabularnewline
{\small $x_{ji_{m}}$ } & {\small upload rate from helper $j$ to user $i_{m}$}\tabularnewline
{\small $f_{jm}$ } & {\small fraction of video $m$ stored by helper $j$, in $[0,1]$ }\tabularnewline
{\small $k_{ji_{m}}$ } & {\small helper $j$'s availability price to user $i_{m}$}\tabularnewline
{\small $\lambda_{j},\mu_{j}$ } & {\small bandwidth, storage prices of helper $j$}\tabularnewline
\hline
\end{tabular}
}
Note: we use bold-type to denote vectors.
\end{table}

We break video $m$ into segments, each having $k$ packets of equal size, e.g., 1KB. A helper increases (decreases) its stored portion of video $m$, by downloading (offloading) at the unit of one packet per segment for all video $m$'s segments. Each packet that the helper stores for each segment is coded using a random linear combination of the $k$ original packets of that segment. In this way, any coded packets from the helpers are equally useful to users in need of the corresponding segments. Consequently, a helper with $f_{jm}$ fraction of the video $m$ can supply at the rate of $f_{jm}r_{m}$ to each user in video session $m$, regardless of users' playback times and what they receive from other helpers. The above coding and storage arrangement simplifies the system design as well as the problem formulation.

\subsection{Problem Formulation}
Minimizing the server load is equivalent to maximizing the sum of helpers' effective contribution to all the users: $\max_{c,\mathbf{f},\mathbf{x}}{\sum_{m=1}^{M}\sum_{i_m=1}^{I_m}\min(\sum_{j\in \set{N}^c_{i_m}}x_{ji_m},r_m)}$.
Here we implicitly assume that helper's download is a transient cost and is negligible compared to their contribution. Incorporating the constraints, we arrive at the following optimization problem:
\begin{eqnarray}
\max_{c,\mathbf{f},\mathbf{x}} &  & \sum_{m=1}^{M}\sum_{i_{m}=1}^{I_{m}}\min(\sum_{j\in\set{N}_{i_{m}}^{c}}x_{ji_{m}},r_{m})\label{equa:convex}\\
\mbox{s.t.} &  & x_{ji_{m}}\leq f_{jm}r_{m},\;\forall j,m,i_{m}\in\set{N}_{j,m}^{c}\label{equa:cons1}\\
 &  & \sum_{m=1}^{M}\sum_{i_{m}\in\set{N}_{j,m}^{c}}x_{ji_{m}}\leq B_{j},\;\forall j\label{equa:cons2}\\
 &  & \sum_{m=1}^{M}f_{jm}V_{m}\leq S_{j},\;\forall j\label{equa:cons3}\\
 &  & 0\leq f_{jm}\leq1,\;\forall j,m\label{equa:cons4}\\
 &  & c\in\set{C}\label{equa:cons5}\end{eqnarray}
Constraints~$(\ref{equa:cons1})$ are such that each helper's total upload to neighboring user $i_{m}$ who is viewing video $m$ cannot exceed its available service rate for video $m$. Constraints~$(\ref{equa:cons2})$, $(\ref{equa:cons3})$ and $(\ref{equa:cons4})$ are feasibility constraints on bandwidth and storage. $(\ref{equa:cons5})$ are combinatorial,
representing the bounded-neighbor helper-user topology constraints. The above problem is a joint storage, bandwidth, and helper-user topology optimization problem, and is challenging to solve even in a centralized manner due to its combinatorial nature. In the next section, we design distributed algorithms and prove its convergence to a near-optimal solution.

\section{Distributed Solution}
\label{sec:distributed}
Constraints~(\ref{equa:cons1})$\sim$(\ref{equa:cons4}) are independent of constraint~(\ref{equa:cons5}), which allows us to decompose it into a resource allocation problem and a topology building problem and solve them in tandem. We present our solutions to each problem in the following subsections.
\subsection{Storage and Bandwidth Allocation}
Fixing topology $c$, and assigning Lagrangian variables $k_{ji_m}$ to constraints~$(\ref{equa:cons1})$, $\lambda_j$ to constraints~$(\ref{equa:cons2})$, and $\mu_j$ to constraints~$(\ref{equa:cons3})$, we obtain the following partial Lagrangian:
\begin{eqnarray}\label{equa:fixc}
    \min_{\mathbf{k},\mathbf{\lambda},\mathbf{\mu}}\max_{\mathbf{0}\leq\mathbf{f}\leq \mathbf{1},\mathbf{x}}\sum_{m,i_m}\min(\sum_{j\in \set{N}^c_{i_m}}x_{ji_m},r_m)+\sum_{j,m,i_m}\lambda_j(B_j - x_{ji_m})\nonumber\\
    +\sum_{j,m,i_m}k_{ji_m}(f_{jm}r_m-x_{ji_m})+\sum_{j}\mu_j(S_j-\sum_{m}f_{jm}V_m)
\end{eqnarray}
whose optimal solution is denoted by $U(c)$. For simplicity, we abbreviated ${\displaystyle\sum_{m=1}^{M}\sum_{i_m=1}^{I_m}}$ as ${\displaystyle\sum_{m,i_m}}$, and ${\displaystyle\sum_{j=1}^J\sum_{m=1}^{M}\sum_{i_m\in\set{N}^c_{j,m}}}$ as ${\displaystyle\sum_{j,m,i_m}}$. By rearranging the terms, the above problem can be solved successively in the primal and dual variables. Specifically, given $\mathbf{k},\mathbf{\lambda}$ and $\mathbf{\mu}$, we have the subproblem in $\mathbf{x}$ and $\mathbf{f}$:
\begin{eqnarray}\label{equa:primal}
    \max_{\mathbf{x}}\sum_{m,i_m}(\min(\sum_{j\in \set{N}^c_{i_m}}x_{ji_m},r_m)-\sum_{j\in \set{N}^c_{i_m}}(\lambda_j+k_{ji_m})x_{ji_m})\nonumber\\
    + \max_{\mathbf{0}\leq\mathbf{f}\leq \mathbf{1}}\sum_{j,m}(r_m\sum_{i_m\in \set{N}^c_{j,m}}k_{ji_m}-\mu_jV_m)f_{jm}
\end{eqnarray}
The unique structure of the partial Lagrangian allows us to use a simple primal-dual algorithm~\cite{chen2008ump} as its solution, which we state in the following theorem.
\begin{theorem}
The following resource allocation algorithm converges to the optimal solution to problem~(\ref{equa:fixc}):
\begin{eqnarray}\label{equa:solver}
\left\{
  \begin{array}{ll}
    \dot{x}_{ji_m} = \alpha\left(g_{x_{ji_m}}-(\lambda_j+k_{ji_m})\right)_{x_{ji_m}}^{[0,+\infty)},\;\forall j,m,i_m\in\set{N}^c_{j,m}\\
    \dot{f}_{jm} = \beta(\sum_{i_m\in \set{N}^c_{j,m}}k_{ji_m}-l_m\mu_j)_{f_{ji_m}}^{[0,1]},\;\forall j,m\\
    \dot{\lambda}_j = \gamma(\sum_{m=1}^{M}\sum_{i_m\in \set{N}^c_{j,m}}x_{ji_m} - B_j)_{\lambda_j}^{[0,+\infty)},\;\forall j\\
    \dot{\mu}_j = \delta(\sum_{m=1}^{M}f_{ji_m}V_m - S_j)_{\mu_j}^{[0,+\infty)},\;\forall j\\
    \dot{k}_{ji_m} = \varepsilon(x_{ji_m}-f_{jm}r_m)_{k_{ji_m}}^{[0,+\infty)},\;\forall j,m,i_m\in\set{N}^c_{j,m}.
  \end{array}
\right.
\end{eqnarray}
where $
h_y^{[a,b]} = \left\{
                       \begin{array}{ll}
                         \min(0,h), & \hbox{$y\geq b$;} \\
                         h, & \hbox{$a<y<b$;} \\
                         \max(0,h), & \hbox{$y\leq a$,}
                       \end{array}
                     \right.$
$g_{x_{ji_m}}$ is the partial derivative of the function $g$ with respect to $x_{ji_m}$ and $g=\min(\sum_{j\in \set{N}^c_{i_m}}x_{ji_m},r_m)$, and $\alpha,\beta,\gamma,\delta,\varepsilon$ are small step sizes.
\end{theorem}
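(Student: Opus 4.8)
The plan is to recognize the dynamics in~(\ref{equa:solver}) as a projected primal--dual (saddle-point) gradient flow for the convex program~(\ref{equa:convex})--(\ref{equa:cons4}) with the topology $c$ held fixed, and to certify convergence with a Lyapunov argument. First I would establish that, for fixed $c$, the problem is a concave maximization: the objective $\sum_{m,i_m}\min(\sum_{j}x_{ji_m},r_m)$ is a sum of minima of affine functions, hence concave, and the feasible set~(\ref{equa:cons1})--(\ref{equa:cons4}) is polyhedral. Slater's condition holds trivially (e.g.\ $\mathbf{x}=\mathbf{0},\mathbf{f}=\mathbf{0}$ is strictly feasible for the active inequalities), so strong duality guarantees a saddle point $(\mathbf{x}^{\ast},\mathbf{f}^{\ast},\mathbf{\lambda}^{\ast},\mathbf{\mu}^{\ast},\mathbf{k}^{\ast})$ of the Lagrangian $L$ in~(\ref{equa:fixc}) that attains the optimal value $U(c)$. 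The point is then to observe that~(\ref{equa:solver}) is precisely gradient \emph{ascent} of $L$ in the primal block $(\mathbf{x},\mathbf{f})$ and gradient \emph{descent} of $L$ in the dual block $(\mathbf{\lambda},\mathbf{\mu},\mathbf{k})$, with $h_y^{[a,b]}$ projecting each coordinate back onto its feasible box.

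Next I would introduce the step-size-weighted quadratic Lyapunov function
\[
W=\tfrac{1}{2\alpha}\|\mathbf{x}-\mathbf{x}^{\ast}\|^2+\tfrac{1}{2\beta}\|\mathbf{f}-\mathbf{f}^{\ast}\|^2+\tfrac{1}{2\gamma}\|\mathbf{\lambda}-\mathbf{\lambda}^{\ast}\|^2+\tfrac{1}{2\delta}\|\mathbf{\mu}-\mathbf{\mu}^{\ast}\|^2+\tfrac{1}{2\varepsilon}\|\mathbf{k}-\mathbf{k}^{\ast}\|^2,
\]
the squared distance to the saddle point. The weights cancel the rates in~(\ref{equa:solver}), so $\dot W$ collapses to a sum of inner products of each deviation $(\,\cdot-\cdot^{\ast})$ with its corresponding projected right-hand side. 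The key fact about the projection is that, since each saddle coordinate lies in its own box, $(y-y^{\ast})\,h_y^{[a,b]}\le (y-y^{\ast})\,h$ holds coordinatewise, where $h$ is the raw (unprojected) rate: the projection only annihilates an update pointing out of the box, hence away from $y^{\ast}$, so it can never increase $\dot W$. This lets me upper-bound $\dot W$ by the same expression with the raw gradients of $L$ in place of the projected ones.

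I would then exploit the saddle structure. Concavity of $L$ in the primal block gives $(\,(\mathbf{x},\mathbf{f})-(\mathbf{x}^{\ast},\mathbf{f}^{\ast}))^{\top}\nabla_{\mathbf{x},\mathbf{f}}L\le L(\mathbf{x},\mathbf{f},\mathbf{\lambda},\mathbf{\mu},\mathbf{k})-L(\mathbf{x}^{\ast},\mathbf{f}^{\ast},\mathbf{\lambda},\mathbf{\mu},\mathbf{k})$, while the affine dependence of $L$ on the dual block yields the matching identity $-(\,(\mathbf{\lambda},\mathbf{\mu},\mathbf{k})-(\mathbf{\lambda}^{\ast},\mathbf{\mu}^{\ast},\mathbf{k}^{\ast}))^{\top}\nabla_{\text{dual}}L=L(\mathbf{x},\mathbf{f},\mathbf{\lambda}^{\ast},\mathbf{\mu}^{\ast},\mathbf{k}^{\ast})-L(\mathbf{x},\mathbf{f},\mathbf{\lambda},\mathbf{\mu},\mathbf{k})$. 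Adding the two telescopes the cross terms and leaves $\dot W\le L(\mathbf{x},\mathbf{f},\mathbf{\lambda}^{\ast},\mathbf{\mu}^{\ast},\mathbf{k}^{\ast})-L(\mathbf{x}^{\ast},\mathbf{f}^{\ast},\mathbf{\lambda},\mathbf{\mu},\mathbf{k})$; the saddle inequalities $L(\mathbf{x},\mathbf{f},\mathbf{\lambda}^{\ast},\mathbf{\mu}^{\ast},\mathbf{k}^{\ast})\le L^{\ast}\le L(\mathbf{x}^{\ast},\mathbf{f}^{\ast},\mathbf{\lambda},\mathbf{\mu},\mathbf{k})$ then force $\dot W\le 0$. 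Thus $W$ is nonincreasing and the trajectory is bounded, giving Lyapunov stability, and I would finish with LaSalle's invariance principle: every trajectory converges to the largest invariant subset of $\{\dot W=0\}$, which I would identify with the set of saddle points, i.e.\ the optimizers of~(\ref{equa:fixc}) attaining $U(c)$.

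I expect the main obstacle to be the nonsmoothness of the objective. At the kink $\sum_{j}x_{ji_m}=r_m$ the partial derivative $g_{x_{ji_m}}$ is only a subgradient, so~(\ref{equa:solver}) must be read as a differential inclusion and the primal inequality above justified via subgradient (rather than gradient) inequalities and nonsmooth/Clarke calculus. A closely related subtlety is that, because $\min(\cdot,r_m)$ is merely concave and not strictly so, $\{\dot W=0\}$ is generally not a single point; the delicate part of the LaSalle step is to characterize its largest invariant subset and to argue that on it the primal variables still realize the optimal value $U(c)$ even when the optimizer is non-unique.
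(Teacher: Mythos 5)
Your proposal is correct and follows essentially the same route as the paper: the identical step-size-weighted quadratic Lyapunov function, the same observation that the projection terms $h_y^{[a,b]}$ can only decrease $\dot W$, and the same Krasovskii--LaSalle conclusion (your saddle-inequality bookkeeping is just a repackaging of the paper's explicit KKT cross-term cancellation and concavity bound on $\sum(x_{ji_m}-x^{\ast}_{ji_m})(g_{x_{ji_m}}-g_{x^{\ast}_{ji_m}})$). You also correctly flag the two points the paper itself glosses over --- the subdifferential at the kink of $g$ and the characterization of the largest invariant subset of $\{\dot W=0\}$, which the paper defers to a reference.
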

\begin{proof}
At optimal, the following KKT conditions of the Lagrangian in~(\ref{equa:fixc}) should also hold:
\begin{eqnarray}
\left\{
  \begin{array}{ll}
    (g_{x^{\ast}_{ji_m}}-(\lambda^{\ast}_j+k^{\ast}_{ji_m}))_{x^{\ast}_{j_im}}^{[0,+\infty)} = 0 \\
    (\sum_{i_m\in \set{N}_j^m}k^{\ast}_{ji_m}-l_m\mu^{\ast}_j)_{f^{\ast}_{ji_m}}^{[0,1]} = 0  \\
    \lambda^{\ast}_j(\sum_{m=1}^{M}\sum_{i_m\in \set{N}_j^{m}}x^{\ast}_{ji_m} - B_j)=0\\
    \mu^{\ast}_j(\sum_{m=1}^{M}f^{\ast}_{ji_m}V_m - S_j)=0\\
    k^{\ast}_{ji_m}(x^{\ast}_{ji_m}-f^{\ast}_{jm}r_m)=0
  \end{array}
  \nonumber
\right.
\end{eqnarray}
where $\mathbf{x}^{\ast}$ and $\mathbf{f}^{\ast}$ are the primal optimal and $\mathbf{\lambda}^{\ast},\mathbf{\mu}^{\ast}$ and $\mathbf{k}^{\ast}$ are the dual optimal. Strictly speaking, $g$ is not differentiable at $\sum_{j\in \set{N}^c_{i_m}}x_{ji_m}=r_m$, where we simply let $g_{x_{ji_m}}=0$ in practice without affecting the performance. Denote by $\mathbf{y}=(\mathbf{x},\mathbf{f},\mathbf{\lambda},\mathbf{\mu},\mathbf{k})$ and by $\mathbf{y^{\ast}}=(\mathbf{x^{\ast}},\mathbf{f^{\ast}},\mathbf{\lambda^{\ast}},\mathbf{\mu^{\ast}},\mathbf{k^{\ast}})$. To prove that $\mathbf{y}\rightarrow \mathbf{y^{\ast}}$, we propose the following generalized energy function:
\begin{eqnarray}
    V(\mathbf{y}) & = & \frac{1}{2\alpha}\|\mathbf{x}-\mathbf{x^{\ast}}\|^2 + \frac{1}{2\beta}\|\mathbf{f}-\mathbf{f^{\ast}}\|^2 + \frac{1}{2\gamma}\|\mathbf{\lambda}-\mathbf{\lambda^{\ast}}\|^2 \nonumber\\
    &  & +\frac{1}{2\varepsilon}\|\mathbf{k}-\mathbf{k^{\ast}}\|^2 + \frac{1}{2\delta}\|\mathbf{\mu}-\mathbf{\mu^{\ast}}\|^2
    \nonumber
\end{eqnarray}
and show that (a) $V(\mathbf{y}) > 0\;\forall \mathbf{y}\neq \mathbf{y^{\ast}}$ and $V(\mathbf{y^{\ast}})=0$; (b) $\dot{V}(\mathbf{y}) \leq 0\;\forall \mathbf{y}$ and $\dot{V}(\mathbf{y^{\ast}})=0$.

(a) is obvious since $V(\mathbf{y})$ is summation of quadratic terms. To show (b), we derive $\dot{V}(\mathbf{y})$:
\begin{eqnarray}
    \dot{V}(\mathbf{y}) & = & \sum(x_{ji_m}-x^{\ast}_{ji_m})(g_{x_{ji_m}}-(\lambda_j+k_{ji_m}))_{x_{j_im}}^{[0,+\infty)}\nonumber\\
      &  & + \sum(f_{jm}-f^{\ast}_{jm})(\sum k_{ji_m}-l_m\mu_j)_{f_{ji_m}}^{[0,1]}\nonumber\\
      &  & + \sum(\lambda_{j}-\lambda^{\ast}_{j})(\sum x_{ji_m} - B_j)_{\lambda_j}^{[0,+\infty)}\nonumber\\
      &  & + \sum(k_{ji_m}-k^{\ast}_{ji_m})(x_{ji_m}-f_{jm}r_m)_{k_{ji_m}}^{[0,+\infty)}\nonumber\\
      &  & + \sum(\mu_{j}-\mu^{\ast}_{j})(\sum f_{ji_m}V_m- S_j )_{\mu_j}^{[0,+\infty)}\nonumber
\end{eqnarray}
by applying partial derivatives and plugging in the dynamic system equations. For simplicity, we have omitted the sets over which the terms are summed up. It is easy to see that $\dot{V}(\mathbf{y^{\ast}})=0$. Now we can upper-bound $\dot{V}(\mathbf{y})$ as follows:
\begin{eqnarray}
    \dot{V}(\mathbf{y}) & \leq & \sum(x_{ji_m}-x^{\ast}_{ji_m})(g_{x_{ji_m}}-(\lambda_j+k_{ji_m}))\nonumber\\
      &  & + \sum(f_{jm}-f^{\ast}_{jm})(\sum k_{ji_m}-l_m\mu_j)\nonumber\\
      &  & + \sum(\lambda_{j}-\lambda^{\ast}_{j})(\sum x_{ji_m} - B_j)\nonumber\\
      &  & + \sum(k_{ji_m}-k^{\ast}_{ji_m})(x_{ji_m}-f_{jm}r_m)\nonumber\\
      &  & + \sum(\mu_{j}-\mu^{\ast}_{j})(\sum f_{ji_m}V_m- S_j)\nonumber\\
      & = & \sum(x_{ji_m}-x^{\ast}_{ji_m})(g_{x_{ji_m}}-g_{x^{\ast}_{ji_m}})\nonumber\\
      &  & + \sum(x_{ji_m}-x^{\ast}_{ji_m})(g_{x^{\ast}_{ji_m}}-(\lambda^{\ast}_j+k^{\ast}_{ji_m}))\nonumber\\
      &  & + \sum(f_{jm}-f^{\ast}_{jm})(\sum k^{\ast}_{ji_m}-l_m\mu^{\ast}_j)\nonumber\\
      &  & + \sum(\lambda_{j}-\lambda^{\ast}_{j})(\sum x^{\ast}_{ji_m} - B_j)\nonumber\\
      &  & + \sum(k_{ji_m}-k^{\ast}_{ji_m})(x^{\ast}_{ji_m}-f^{\ast}_{jm}r_m)\nonumber\\
      &  & + \sum(\mu_{j}-\mu^{\ast}_{j})(\sum f^{\ast}_{ji_m}V_m- S_j)\nonumber\\
      & \leq & \sum(x_{ji_m}-x^{\ast}_{ji_m})(g_{x_{ji_m}}-g_{x^{\ast}_{ji_m}})\nonumber\\
      &  & + 0 + 0 + 0 + 0 + 0 \leq 0\nonumber
\end{eqnarray}
where the first inequality is obtained by dropping the $^{[a,b]}_y$ terms, and the second inequality is obtained by applying the set of KKT conditions. The last set inequality holds due to the concavity of the function $\min{(\sum x_{ji_m},r_m})$ over $x_{ji_m}$.

Using (a) and (b), it follows from Krasovskii-LaSalle principle~\cite{lasalle1960some} that $\mathbf{y}$ converges to the set $\set{S}=\{\mathbf{y}|\dot{V}(\mathbf{y})=0\}$. It remains to show that the $\set{S}$ contains no trajectories other than $\{\mathbf{y}=\mathbf{y^{\ast}}\}$. Due to the space constraint, we omit that part of the proof in this paper. Interested readers can refer to~\cite{chen2008ump} for details.
\end{proof}

We make the following remarks:
\begin{itemize}
  \item The resource allocation algorithm has intuitive economic explanations. Specifically, one can view $k_{ji_m}$ and $\lambda_j$ as the \textit{video availability prices} and \textit{bandwidth prices} which are induced by helper $j$'s storage and bandwidth constraints and which helper $j$ \textit{charges} user $i_m$. One can also view $\mu_j$ as \textit{storage prices} that helper $j$ has to \textit{pay}. Indeed, the larger the video availability and bandwidth prices are, the smaller $\dot{x}_{ji_m}$ (that user $i_m$ requests from helper $j$) is. The larger/smaller the video availability price/storage price is, the larger $f_{jm}$ (that helper $j$ increases for video $m$) is. Similarly, the values of the prices are also driven by the relative difference between the given demand and the available resources. For example, the increase in video availability price $k_{ji_m}$ is proportional to the difference between the demand $x_{ji_m}$ and the available rate $f_{jm}r_m$. This economic framework can be potentially extended to building incentive mechanisms into the system.
  \item It is also not hard to see that the primal variables $\mathbf{x}$ and $\mathbf{f}$ will converge to the following intuitive solutions. In problem~(\ref{equa:primal}), every user $i_m$ will choose to request $x_{ji_m}$ with the smallest combined prices $(\lambda_j+k_{ji_m})$ until it reaches the maximum possible value. If the summation of received rates has not reached $r_m$, it will choose to request $x_{ji_m}$ with the second smallest combined prices. It will continue to do so until the summation of the received rates reaches $r_m$. Similarly, the solution for $f_{jm},m=1,2,\ldots,M$ can be obtained by water-filling helper $j$'s storage $S_j$ in descending order of the combined prices $(r_m\sum_{i_m\in \set{N}^c_{j,m}}k_{ji_m}-\mu_jV_m)$, which matches with helper $j$'s goal to maximize its ``profit''.
\end{itemize}

\subsection{Markov Approximation of Overlay Optimization}
Recall that $U(c)$ is the optimal solution to problem~(\ref{equa:fixc}). It is then left to solve:
\begin{eqnarray}\label{equa:solvec}
    \max_{c}U(c)\;\;\;\;\;\;\mbox{s.t.}\;\;\;c\in \set{C}
\end{eqnarray}
However, the set of possible overlay configurations given peers neighborhood constraints is exponential in the number of nodes, which make the problem NP hard even in a centralized manner. To overcome this difficulty, we re-write it as follows:
\begin{eqnarray}\label{equa:MWIS}
    \max_{p} &  & \sum_{c\in \set{C}}p_cU(c)\\
    \mbox{s.t.} &  & \sum_{c\in \set{C}}p_c = 1\;\;\;\mbox{and}\;\;\; 0\leq p_c\leq 1,\;\forall c\in \set{C}
    \nonumber
\end{eqnarray}
One can see that the problems~(\ref{equa:MWIS}) and~(\ref{equa:solvec}) are equivalent: the optimal solution to problem~(\ref{equa:MWIS}) is obtained by setting $p_{c^{*}}=1$ for $c^{*}=\argmax_{c'\in\set{C}}U(c')$ and $p_{c}=0$ for all other $c\in\set{C}$. Relaxing the objective $\sum_{c\in\set{C}}p_{c}U(c)$ by adding a weighted entropy term $\frac{1}{\kappa}H(p)$, where
$\kappa>0$ and $H(p)=-\sum_{c\in\set{C}}p_{c}\log{p_{c}}$, we have the following theorem shown by Chen et al.~\cite{chen2009markov}:
\begin{theorem}
The optimal solution to:
\begin{eqnarray}\label{equa:MWIS_approx}
    \max_{\mathbf{p}} &  & \sum_{c\in \set{C}}p_cU(c) - \frac{1}{\kappa}\sum_{c\in \set{C}}p_c\log{p_c}\\
    \mbox{s.t.} &  & \sum_{c\in \set{C}}p_c = 1\;\;\;\mbox{and}\;\;\; 0\leq p_c\leq 1,\;\forall c\in \set{C}\nonumber
\end{eqnarray}
is given by:
\begin{eqnarray}\label{equa:MC_solution}
    p_c^{\ast} = \frac{\exp{(\kappa U(c))}}{\sum_{c\in \set{C}}\exp{(\kappa U(c))}},\;\forall c\in \set{C}
\end{eqnarray}
\end{theorem}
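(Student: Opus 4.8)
The plan is to recognize (\ref{equa:MWIS_approx}) as a strictly concave maximization over the probability simplex and to solve it by the method of Lagrange multipliers, which will directly produce the Gibbs distribution (\ref{equa:MC_solution}).

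First I would establish that the problem is well-posed. The feasible set $\{\mathbf{p}:\sum_{c\in\set{C}}p_c=1,\,0\leq p_c\leq1\}$ is the probability simplex, which is convex and compact, and the objective $F(\mathbf{p})=\sum_{c\in\set{C}}p_cU(c)-\frac{1}{\kappa}\sum_{c\in\set{C}}p_c\log p_c$ is continuous under the convention $0\log 0=0$. The linear term is affine and each summand $-\frac{1}{\kappa}p_c\log p_c$ is strictly concave (recall $\kappa>0$), so $F$ is strictly concave. Hence a unique global maximizer $\mathbf{p}^{\ast}$ exists, and the KKT conditions are both necessary and sufficient for optimality.

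Next I would form the Lagrangian by dualizing only the normalization equality $\sum_{c\in\set{C}}p_c=1$ with a multiplier $\nu$, keeping the box constraints explicit. A key observation is that the nonnegativity constraints never bind: since $\frac{\partial}{\partial p_c}\bigl(-\frac{1}{\kappa}p_c\log p_c\bigr)\to+\infty$ as $p_c\to0^{+}$, the directional derivative of $F$ into the feasible interior is strictly positive near the face $p_c=0$, so the maximizer lies in the relative interior with every $p_c^{\ast}>0$; the upper bounds $p_c\leq1$ are then automatically slack because the entries sum to one. Setting $\partial F/\partial p_c-\nu=0$ at an interior point yields $U(c)-\frac{1}{\kappa}(\log p_c+1)-\nu=0$, i.e.\ $\log p_c=\kappa U(c)-1-\kappa\nu$, so that $p_c=\exp(\kappa U(c))\cdot\exp(-1-\kappa\nu)$ is proportional to $\exp(\kappa U(c))$ with a $c$-independent constant.

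Finally I would pin down that constant via $\sum_{c\in\set{C}}p_c=1$, which forces $\exp(-1-\kappa\nu)=1/\sum_{c\in\set{C}}\exp(\kappa U(c))$ and hence exactly the stated form (\ref{equa:MC_solution}). By strict concavity of $F$ on the simplex, this stationary point is the unique global maximizer, completing the argument. The only delicate step is justifying that the constraints $p_c\geq0$ do not bind---this is what allows the plain stationarity equation, rather than the full complementary-slackness system, to determine the solution---and it is handled by the barrier-like blow-up of the entropy derivative at the origin noted above; everything else is routine Lagrangian bookkeeping.
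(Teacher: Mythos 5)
Your proposal is correct and follows essentially the same route as the paper: form the Lagrangian, write the stationarity condition $U(c)-\frac{1}{\kappa}(\log p_c+1)-\nu=0$, and normalize to obtain the Gibbs form. If anything, your explicit argument that the nonnegativity constraints cannot bind (via the blow-up of the entropy derivative at $p_c=0^{+}$) and your appeal to strict concavity for uniqueness make the treatment slightly more complete than the paper's, which carries the multipliers $\nu_c^{\ast}$ through the algebra and drops them without comment in the final expression.
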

\begin{proof}
The Lagrangian of the problem~(\ref{equa:MWIS_approx}) is given by:
\begin{eqnarray}
    L(p_c,\nu_c,\mu)& = &\sum_{c\in \set{C}}p_cU(c) - \frac{1}{\kappa}\sum_{c\in \set{C}}p_c\log{p_c}\nonumber\\
     &  & + \sum_{c\in \set{C}}\nu_cp_c + \mu(1 - \sum_{c\in \set{C}}p_c),
    \nonumber
\end{eqnarray}
where $\nu_c$ and $\mu$ are the Lagrangian variables. At optimal, the following KKT conditions~\cite{boyd2004convex} should hold:
\begin{eqnarray}
    U(c) - \frac{1}{\kappa}(\log{p^{\ast}_c}+1) + \nu^{\ast}_c - \mu^{\ast} = 0,\;\forall c\in \set{C},\nonumber
\end{eqnarray}
where $p^{\ast}_c$ is the primal optimal, and $\nu^{\ast}_c$ and $\mu^{\ast}$ are the dual optimal.
Writing $p^{\ast}$ as a function of $\nu^{\ast}_c$ and $\mu^{\ast}$ and applying the constraint $\sum_{c\in \set{C}}p^{\ast}_c=1$, we obtain:
\begin{eqnarray}
    \mu^{\ast} = \frac{1}{\kappa}\log{(\sum_{c\in \set{C}}\exp{(\kappa(U(c)+\nu^{\ast}_c)-1)})}.\nonumber
\end{eqnarray}
Plugging $\mu^{\ast}$ back into $p^{\ast}_c$, we get:
\begin{eqnarray}
    p^{\ast}_c & = & \frac{\exp{(\kappa(U(c)+\nu^{\ast}_c)-1)}}{\exp{(\kappa\mu^{\ast})}}\nonumber\\
     & = & \frac{\exp{(\kappa U(c))}}{\sum_{c\in \set{C}}\exp{(\kappa U(c))}},\;\forall c\in \set{C}.\nonumber
\end{eqnarray}
\end{proof}

Note that the optimal solution $p_{c}^{*}$ is in a product-form, and thus is the stationary distribution of some time-reversible Markov Chain (MC), hence the term MC approximation. Note that as $\kappa\rightarrow+\infty$, $p_{c^{*}}^{*}\rightarrow 1$ and therefore the optimal solution of the relaxed problem~(\ref{equa:MWIS_approx}) approaches to that of the original problem~(\ref{equa:solvec}). It is also easy to see that for a fixed $\kappa$, the error term $\frac{1}{\kappa}H(p)$ is bounded by $\frac{1}{\kappa}\log{|\set{C}|}$.

Our motivation behind this approximation is that it can potentially lead to distributed solutions. In this case, one can construct a MC with the overlay topology configurations as its states, and carefully design transition rates $q_{c,c'}$ such that the overall system will probabilistically jump between topology configurations while staying in the best configuration, i.e., $c^{*}$ for most of the time, and that the system performance will approach to the optimal. One straightforward design of $q_{c,c'}$ is given by:
\begin{eqnarray}\label{equa:idealMC}
q_{c,c'} = \left\{
              \begin{array}{ll}
               \frac{\tau}{\exp{\left(\kappa\left(U(c)\right)\right)}} & \hbox{$c,c'$ satisfy $S$;} \\
               0 & \hbox{otherwise.}
              \end{array}
           \right.
\end{eqnarray}
where $\tau>0$ is a constant, $U(c)$ is the overall system utility under state $c$, and $S$ is the following set of conditions:
\begin{itemize}
  \item $\exists\tilde{c}$ s.t. $\tilde{c}\subseteq c, \tilde{c}\subseteq c', |c\setminus\tilde{c}|=|c'\setminus\tilde{c}|=1$;
  \item Link $c\setminus\tilde{c}$ and link $c'\setminus\tilde{c}$ originates from the \textit{same} peer.
\end{itemize}
In other words, only the following state transitions $c\rightarrow c'$ are valid: \textit{a single peer} first drops a \textit{single} connection to one of his neighbors (from $c\rightarrow\tilde{c}$) and then randomly adds a new \textit{single} connection from his neighborhood (from $\tilde{c}\rightarrow c'$). $\tilde{c}$ is an auxiliary state and can be viewed as the intermediate state in which a single link from a single peer is dropped from $c$, where $c\setminus\tilde{c}$ represents the dropped link. It is not hard to see that $q_{c,c'}$ satisfies the detailed balanced equations $q_{c,c'}p_{c}^{*}=q_{c',c}p_{c'}^{*}$, thus the stationary distribution $p_{c}^{*}$ in equation~(\ref{equa:MC_solution}) can be achieved. We refer to this as the ``\textit{uniform-neighbor-choking}" algorithm because peers uniformly randomly choke neighbors in periods that depend on $U(c)$.

However, one caveat of the above design is that a peer still needs to know the global information $U(c)$ that needs to be broadcast to all the peers from time to time. This burdens the system with overhead. It is desirable to have a distributed algorithm in which each peer needs only \textit{local} information to perform such update and still achieve \textit{global} optimality.

\subsection{Soft-Worst-Neighbor-Choking Algorithm}
Motivated by the above discussions, we propose the ``\textit{soft-worst-neighbor-choking}" algorithm by designing:
\begin{eqnarray}\label{equa:modified_MC}
\bar{q}_{c,c'} = \frac{\tau}{\exp{(\kappa x^{c\setminus\tilde{c}})}}
\end{eqnarray}
where $x^{c\setminus\tilde{c}}$ is the rate on the dropped link $c\setminus\tilde{c}$, and $c,\tilde{c},c'$ should satisfy $S$. Here, the transition rates depend on only local information of link rates of peers' active neighbors.

We now give the overall distributed algorithm. For simplicity, the algorithm is stated under the perspective of user $i_m$, and those at other users/helpers are similar and self-explanatory.\\
\textbf{Topology building - ``soft-worst-neighbor-choking"}
\begin{itemize}
\item Initialization: User $i_m$ randomly chooses and connects to $N^{\max}_{i_m}$ neighboring helpers from his neighborhood $\set{N}_{i_m}$ and does the following steps.
\item Step 1: User $i_m$ independently draws an exponentially distributed random variable with mean\\ $\frac{1}{\tau(|\set{N}_{i_m}|-N^{\max}_{i_m})\sum_{j\in\set{N}^c_{i_m}}{\exp{(-\kappa x_{ji_m})}}}$ and counts down to zero.
\item Step 2: After the count-down expires, user $i_m$ drops neighbor $j$ with probability $\frac{\exp{(-\kappa x_{ji_m})}}{\sum_{j'\in\set{N}^c_{i_m}}{\exp{(-\kappa x_{j'i_m})}}}$ and randomly chooses and connects to a new neighbor from the set $\set{N}_{i_m}\setminus\set{N}^c_{i_m}$. It then repeats Step 1.
\end{itemize}
We make the following remarks:
\begin{itemize}
  \item It is easy to see that the algorithm gives $q_{c,c'}$ as in~(\ref{equa:modified_MC}).
  \item The algorithm is fully distributed, i.e., each peer runs the algorithm independently. Compared to the ``uniform-neighbor-choking" algorithm, peers only need to know local information of the link rates of their one-hop neighbors. The algorithm is intuitive: the larger the link rate, the less likely it is dropped and vice versa.
  \item It is worth noting that BitTorrent~\cite{cohen2003incentives} uses a ``\textit{worst-neighbor-choking}" algorithm, where each peer periodically chokes the link with the worst rate. In our case, link rates are weighted exponentially. The worst link is choked with the highest probability (which approaches $1$ as $\kappa\rightarrow+\infty$) while other links can also be choked occasionally, hence the term ``\textit{soft-worst-neighbor-choking}" algorithm. This algorithm is also generalizable to other P2P systems.
\end{itemize}

\subsection{Performance Analysis of Soft-Worst-Neighbor-Choking}
We now state the mathematical underpinnings behind such design of the algorithm and analyze its performance. It is interesting to see that the rate $x^{c\setminus\tilde{c}}$ on link $c\setminus\tilde{c}$ can be viewed as an approximation to $U(c)-U(\tilde{c})$, which is the overall system performance difference before and after link $c\setminus\tilde{c}$ is dropped. If $x^{c,\tilde{c}} = U(c)-U(\tilde{c})$, i.e., the helper cannot re-utilize the upload rate on $c\setminus\tilde{c}$ after it is dropped, then we have:
\begin{eqnarray}
\bar{q}_{c,c'} = \frac{\tau}{\exp{\left(\kappa\left(U(c)-U(\tilde{c})\right)\right)}}\nonumber
\end{eqnarray}
where $c,\tilde{c},c'$ satisfy $S$. In this case, $\bar{q}_{c,c'}$ still satisfies $\bar{q}_{c,c'}p_{c}^{*}=\bar{q}_{c',c}p_{c'}^{*}$ and the stationary distribution is no different from that of the uniform-neighbor-choking algorithm in~(\ref{equa:MC_solution}). However, $U(c)-U(\tilde{c}) \leq x^{c,\tilde{c}}$ in general, because the rate $x^{c,\tilde{c}}$ on the dropped link maybe fully or partially re-utilized by the helper for his other neighbors. In the following, we show that under some minor assumptions, one can still achieve a stationary distribution in product form similar to that in~(\ref{equa:MC_solution}).

Denote by $\omega_c = x^{c,\tilde{c}} - \left(U(c)-U(\tilde{c})\right)$ the error term resulted by approximating $\left(U(c)-U(\tilde{c})\right)$ with $x^{c\setminus\tilde{c}}$. Depending on overlay $c$ and the actual converged values of the storage and rate allocation algorithm, $\omega_c$ may take values anywhere in between $0$ and $B_{\max}$, where $B_{\max}$ is the maximum over all helpers' upload capacity. We quantize such error $\omega_c$ into $n_c+1$ values $[0,\frac{B_{\max}}{n_c},\frac{2B_{\max}}{n_c},\ldots,B_{\max}]$, and assume that $\omega_c=\frac{kB_{\max}}{n_c}$ with probability $\rho_{c_{k}},k=0,1,\ldots,n_{c}$ and $\sum_{k=0}^{n_{c}}{\rho_{c_{k}}}=1$. Under these assumptions, we show the following theorem:
\begin{theorem}
The stationary distribution $p_c$ of MC with transition rates~(\ref{equa:modified_MC}) is given by:
\begin{eqnarray}\label{equa:pi_conglomerate_modifiedMC}
p_{c} = \sum^{n_c}_{k=0}{p_{c_{k}}} =  \frac{\sigma_c\exp{\left(\kappa U(c)\right)}}{\sum_{c'\in\set{C}}{\sigma_{c'}\exp{\left(\kappa U(c')\right)}}}
\end{eqnarray}
where $\sigma_c=\sum_{k=0}^{n_c}{\rho_{c_k}\exp{\left(\kappa\frac{kB_{\max}}{n_c}\right)}}$.
\end{theorem}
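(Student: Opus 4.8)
The plan is to enlarge the state space so that the randomness in the error term $\omega_c$ is absorbed into the chain itself, turning the effective (error-perturbed) dynamics into an honest time-reversible Markov chain whose marginal over configurations is the claimed distribution. First I would split each configuration $c$ into $n_c+1$ substates $c_0,c_1,\ldots,c_{n_c}$, where substate $c_k$ records that the realized error is $\omega_c=kB_{\max}/n_c$. Being in configuration $c$ then means being in exactly one of these substates, and the conditional probability of landing in $c_k$ upon entering $c$ is $\rho_{c_k}$.

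Next I would write down the transition rates of this augmented chain. Substituting $x^{c\setminus\tilde{c}}=\left(U(c)-U(\tilde{c})\right)+kB_{\max}/n_c$ into~(\ref{equa:modified_MC}), the rate at which substate $c_k$ fires the drop-and-add move through intermediate state $\tilde{c}$ toward configuration $c'$ is $\tau\exp\!\left(-\kappa\left(U(c)-U(\tilde{c})+kB_{\max}/n_c\right)\right)$; upon arriving at $c'$ a fresh error is drawn, so the move lands in substate $c'_{k'}$ with probability $\rho_{c'_{k'}}$. Hence
\begin{eqnarray}
\bar{q}_{c_k,c'_{k'}} = \rho_{c'_{k'}}\frac{\tau}{\exp\left(\kappa\left(U(c)-U(\tilde{c})+\frac{kB_{\max}}{n_c}\right)\right)}\nonumber
\end{eqnarray}
whenever $c,\tilde{c},c'$ satisfy $S$, and zero otherwise.

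Then I would guess the product-form candidate
\begin{eqnarray}
p_{c_k} = \frac{\rho_{c_k}\exp\left(\kappa\frac{kB_{\max}}{n_c}\right)\exp\left(\kappa U(c)\right)}{\sum_{c'\in\set{C}}\sigma_{c'}\exp\left(\kappa U(c')\right)}\nonumber
\end{eqnarray}
and verify the detailed balance equations $p_{c_k}\bar{q}_{c_k,c'_{k'}}=p_{c'_{k'}}\bar{q}_{c'_{k'},c_k}$. The crucial structural fact is inherited from condition $S$: the forward move $c\to c'$ and its reverse $c'\to c$ pass through the \emph{same} intermediate state $\tilde{c}$ (a single peer drops one link and adds another, and undoing it drops the added link and restores the dropped one). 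Consequently, after the $\exp(\kappa kB_{\max}/n_c)$ factors cancel against the denominators of the rates, both sides reduce to the common expression $\tau\rho_{c_k}\rho_{c'_{k'}}\exp\!\left(\kappa U(\tilde{c})\right)$ over the normalizer, confirming reversibility.

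Finally, summing the verified stationary distribution over the substates of each configuration gives $p_c=\sum_{k=0}^{n_c}p_{c_k}$, and pulling out $\exp(\kappa U(c))$ leaves exactly $\sigma_c=\sum_{k=0}^{n_c}\rho_{c_k}\exp\!\left(\kappa kB_{\max}/n_c\right)$ in the numerator, yielding~(\ref{equa:pi_conglomerate_modifiedMC}). The hard part will not be the detailed-balance algebra but setting the augmented chain up correctly: recognizing that the destination error must be re-randomized (the $\rho_{c'_{k'}}$ factor) and that the shared intermediate state $\tilde{c}$ is precisely what lets the $U(\tilde{c})$ terms appear symmetrically. Without the condition-$S$ structure the reverse transition would route through a different intermediate configuration, the $U(\tilde{c})$ terms would no longer match, and reversibility would break.
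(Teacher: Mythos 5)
Your proposal is correct and follows essentially the same route as the paper: both expand each configuration $c$ into substates $c_k$ carrying the quantized error, assign the identical transition rates $\bar{q}_{c_k,c'_{k'}}=\tau\rho_{c'_{k'}}\exp\bigl(-\kappa(U(c)-U(\tilde{c})+\frac{kB_{\max}}{n_c})\bigr)$, establish the product-form stationary distribution via detailed balance, and marginalize over $k$ to recover $\sigma_c$. The only cosmetic difference is that you guess the stationary form and verify reversibility, whereas the paper solves the detailed-balance equations directly and then normalizes.
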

\begin{proof}
Consider a modified MC as follows: expand each state $c$ of the original MC to $n_c+1$ states $c_{k},k=0,1,\ldots,n_c$ with the following transition rates:
\begin{eqnarray}\label{equa:expanded_MC}
\bar{q}_{c_k,c'_{k'}} = \frac{\tau\rho_{c'_{k'}}}{\exp{\left(\kappa\left(U(c)-U(\tilde{c})+\frac{kB_{\max}}{n_c}\right)\right)}}
\end{eqnarray}
where $\rho_{c'_{k'}},k=0,1,\ldots,n_{c'}$, is the probability measure on expanded states and $\sum_{k'=0}^{n_{c'}}{\rho_{c'_{k'}}}=1$. Note that $c_0$ refers to state $c$ with zero error. Using equation~(\ref{equa:expanded_MC}) and detailed balance equations $p_{c_k}\bar{q}_{c_k,c'_{k'}} = p_{c'_{k'}}\bar{q}_{c'_{k'},c_k}$, we have $\forall c_0,c'_{k'}$:
\begin{eqnarray}\label{equa:db_result}
\frac{p_{c_0}}{\rho_{c_{0}}\exp{\left(\kappa U(c)\right)}} = \frac{p_{c'_{k'}}}{\rho_{c'_{k'}}\exp{\left(\kappa\left(U(c')+\frac{k'B_{\max}}{n_{c'}}\right)\right)}}=\hbox{\textit{const}}\nonumber
\end{eqnarray}
Using $\sum_{c\in\set{C}}{\sum^{n_c}_{k=0}{p_{c_k}}}=1$, we obtain:
\begin{eqnarray}\label{equa:pi_modifiedMC}
p_{c_{k}} = \frac{\rho_{c_k}\exp{\left(\kappa\left(U(c)+\frac{kB_{\max}}{n_c}\right)\right)}}{\sum_{c'\in\set{C}}{\sum_{k'=0}^{n_{c'}}{\rho_{c'_{k'}}\exp{\left(\kappa\left(U(c')+\frac{k'B_{\max}}{n_{c'}}\right)\right)}}}}\nonumber
\end{eqnarray}
Denote by $\sigma_c=\sum_{k=0}^{n_c}{\rho_{c_k}\exp{\left(\kappa\frac{kB_{\max}}{n_c}\right)}}$ and we have:
\begin{eqnarray}
p_{c} = \sum^{n_c}_{k=0}{p_{c_{k}}} = \frac{\sigma_c\exp{\left(\kappa U(c)\right)}}{\sum_{c'\in\set{C}}{\sigma_{c'}\exp{\left(\kappa U(c')\right)}}}
\end{eqnarray}
\end{proof}
We make the following remarks:
\begin{itemize}
  \item If $n_c=const\;\forall c$ and the distribution $\rho_{c_k},k=0,1,\ldots,n_c$ is the same $\forall c$, $\sigma_c = const$. In this case, $p_c=p^{*}_c$.
  \item The total variational distance $d_{TV}(\mathbf{p}^{*},\mathbf{p})$ can be upper bounded by $(1-\exp{(-\kappa B_{\max})})$. This is because $p^{*}_c-p_c=p^{*}_c\left(1-\frac{\sigma_c}{\frac{\sum_{c'\in\set{C}}{\sigma_{c'}\exp{(\kappa(U(c')))}}}{\sum_{c'\in\set{C}}{\exp{(\kappa(U(c')))}}}}\right)$. Since $\sigma_c\in[1,\exp{(\kappa B_{\max})}]$, the fractional can be lower bounded by $\exp{(-\kappa B_{\max})}$ and hence the result.
  \item In general, it is difficult to give a tight analytical lower bound on the performance $\sum_{c\in\set{C}}{U(c)p_{c}}$ because $\sigma_c$ is unknown. However, note that when $\kappa\rightarrow+\infty$, $p_{c^{*}}\rightarrow1$ where $c^{*}=\argmax_{c'\in\set{C}}{U(c')}$, and the system approaches the optimal $U(c^{*})$. We will show in our simulations that the algorithm performs well and is close to the optimal.
\end{itemize}

\subsection{Discussions}
Our overall scheme is fully distributed, i.e., each peer runs the algorithm independently and makes changes based on only one-hop local information. The user passes the derivative of their utility function to helpers to perform distributed resource allocation; users and helpers periodically choke their neighbors based the relative performance of their one-hop links.

The deployment of such a practical system is easy: newly deployed helper nodes can automatically connect to a set of interested users and load balance their storage and bandwidth resources using the distributed algorithms. This simple solution helps achieve minimum maintenance overhead and can easily adapt to system dynamics. When system's supply and demand pattern changes due to helper/user joining/leaving the system and video popularity shifting, the helper nodes will automatically update their content caching, allocate their bandwidth resource and dynamically change their neighborhood selections in a distributed and local manner, which will best match to the global system demand and available resources and optimize overall system utility. We summarize the simple algorithms at both classes of peer nodes in the next section.

\section{System Implementation}
\subsection{Back-up Server and Tracker}
A centralized server with all the video content is present that acts as the ``life-line'' to supplement the deficit (if any) in the system. A tracker is used to keep track of all the participating peers and to assist building an overlay network. When a user/helper joins the system, it obtains from the tracker the IP addresses of a list of helpers/users in their neighborhood. It then connects to its maximum allowed number of neighbors randomly chosen from their neighborhood.
\begin{algorithm}[htbp]
  \caption{User Protocol}
  \label{alg:user}
  \begin{algorithmic}[1]
   \STATE \textbf{Initialization: }Set $t_1,t_2=0$, draw $t_3\sim Exp\left(\tau(|\set{N}_{i_m}|-N^{\max}_{i_m})\sum_{j\in\set{N}^c_{i_m}}{\exp{(-\kappa x_{ji_m})}}\right)$, and iterate:
   \IF{mod$(t_1,T_{i_m})=0$}
       \STATE Count the number of received packets from each neighboring helper over the period $[t_1-T_{i_m},t_1-1]$, and update the corresponding average rate $x_{ji_m}$. Derive the derivative $g_{x_{ji_m}}$ of its utility function and sends them to all $j\in \set{N}^c_{i_m}$.
   \ENDIF
   \IF{mod$(t_2,\hbox{BUFFER\_TIME})=0$}
       \STATE Download from the server all the missing packets in the next BUFFER\_TIME worth of segments.
   \ENDIF
   \IF{$t_3=0$}
       \STATE Drop neighbor helper $j$ with probability $\frac{\exp{(-\kappa x_{ji_m})}}{\sum_{j'\in\set{N}^c_{i_m}}{\exp{(-\kappa x_{j'i_m})}}}$. Randomly choose and connect to a new neighbor from the remaining neighborhood to replace $j$, and set $x_{ji_m}=0$. Draw $t_3\sim Exp\left(\tau(|\set{N}_{i_m}|-N^{\max}_{i_m})\sum_{j\in\set{N}^c_{i_m}}{\exp{(-\kappa x_{ji_m})}}\right)$.
   \ENDIF
   \STATE $t_1 \leftarrow t_1 + 1, t_2\leftarrow t_2 + 1, t_3 \leftarrow t_3 - 1$.
  \end{algorithmic}
\end{algorithm}

\subsection{Packet Exchange Protocol}
The video packets each helper stores are coded using a rateless code and downloaded from the server. To the validate the proposed algorithms under asynchronous scenarios, we let each helper $j$ maintain its own clock. Helper $j$ also updates its bandwidth/storage allocation algorithm only in periods of $T_j$ seconds and keeps an outgoing buffer worth of $T_j$ seconds of its upload bandwidth capacity. Users maintain a buffer length denoted by BUFFER\_TIME that covers an integer number of video segments. For a particular user, as it decodes the packets and plays the video in the segment right ahead of its playback time, it also receives packets of the next unfulfilled segment from the helpers. Upon finishing BUFFER\_TIME worth of segments, the user will immediately fetch the packets from the server to fill any missing packets in the next BUFFER\_TIME range. Each user also has its own clock and an bandwidth request update period $T_{i_m}$. The detailed packet exchange protocols for both users and helpers are described in Algorithm~$1$ and $2$ based on the theoretical analysis given in section~(\ref{sec:distributed}).
\begin{algorithm}[htbp]
  \caption{Helper Protocol}
  \label{alg:helper}
  \begin{algorithmic}[1]
   \STATE \textbf{Initialization: }Set $x_{ji_m}=0$, $f_{jm}=0$, $\lambda_j=0$, $\mu_j=0$ and $k_{ji_m}=0$. Set $t_1=0$, draw $t_2\sim Exp\left(\tau(|\set{N}_{j}|-N^{\max}_{j})\sum_{i_m\in\set{N}^c_{j,m}}{\exp{(-\kappa x_{ji_m})}}\right)$, and iterate:
   \IF{mod$(t_1,T_{j})=0$}
       \STATE $x_{ji_m} \leftarrow x_{ji_m} + \alpha(g_{x_{ji_m}}-(\lambda_j+k_{ji_m}))_{x_{j_im}}^{[0,+\infty)}$
       \STATE $f_{jm} \leftarrow f_{jm} + \beta(\sum_{i_m\in \set{N}_j^m}k_{ji_m}-l_m\mu_j)_{f_{ji_m}}^{[0,1]}$
       \STATE $\lambda_j \leftarrow \lambda_j + \gamma(\sum_{m=1}^{M}\sum_{i_m\in \set{N}^c_{j,m}}x_{ji_m} - B_j)_{\lambda_j}^{[0,+\infty)}$
       \STATE $\mu_j \leftarrow \mu_j + \delta(\sum_{m=1}^{M}f_{ji_m}V_m - S_j)_{\mu_j}^{[0,+\infty)}$
       \STATE $k_{ji_m} \leftarrow k_{ji_m} + \varepsilon(x_{ji_m}-f_{jm}r_m)_{k_{ji_m}}^{[0,+\infty)}$.
       \STATE Allocate number of packets equivalent to $x_{ji_m}T_j$ to user $i_m$ for all $i_m$ and put them in the outgoing buffer.
       \STATE Re-allocate its storage of videos according to $f_{ji_m}$.
   \ENDIF
   \STATE Send remaining packets in the buffer to neighbor users.
   \IF{$t_2=0$}
       \STATE Drop neighbor user $i_m$ with probability $\frac{\exp{(-\kappa x_{ji_m})}}{\sum_{i'_m\in\set{N}^c_{j,m}}{\exp{(-\kappa x_{ji'_m})}}}$. Randomly choose and connect to a new neighbor from the remaining neighborhood to replace $i_m$, and set $x_{ji_m}=0$. Draw $t_2\sim Exp\left(\tau(|\set{N}_{j}|-N^{\max}_{j})\sum_{i_m\in\set{N}^c_{j,m}}{\exp{(-\kappa x_{ji_m})}}\right)$.
   \ENDIF
   \STATE $t_1 \leftarrow t_1 + 1, t_2 \leftarrow t_2 - 1$.
  \end{algorithmic}
\end{algorithm}

\section{Simulation Results}
It is worth noting that our proposed analysis relies on a few assumptions made to make mathematical arguments simple. First, we  have assumed in the algorithms that peers have \textit{synchronized} clocks. This can be difficult to maintain in practice. Second, when solving problem~(\ref{equa:solvec}), we have assumed that the underlying resource allocation algorithm has fully converged and ignored the different time scales. Although there exist a number of techniques that can address these issues~\cite{benveniste1990adaptive,chen2008ump}, they are not the focus of our paper and we omit the heavy discussions involved in the analysis. Instead, we validate the feasibility and effectiveness of the proposed algorithms by designing our simulations that capture real-world scenarios including the effects of asynchronized clocks among peers and random network delays. We also show in the simulations that the distributed scheme adapts well to system fluctuations including change in video demand patterns and peer dynamics.
\subsection{Experimental Setup}
We set total number of videos $M=4$, helpers $J=70$ and users $\sum_{m=1}^MI_m=100$. Table~\ref{tab:dist1} shows each video's streaming rate and the fraction of users watching it. Helpers have upload and storage capacities with different distributions shown in Tables~\ref{tab:dist2} and~\ref{tab:dist3}. Each peer can potentially connect to every other peer in the system, but has a maximum allowed number of neighbors uniformly randomly chosen from $[3,10]$. This setup is based on practical data in commercialized P2P systems~\cite{pplive,ppstream,xunlei}, which makes it easy to test the robustness of the proposed algorithms in highly heterogenous scenarios. We also set the step sizes $\alpha=1, \beta=0.01, \gamma=\delta=0.5, \varepsilon=0.05$ for the bandwidth and storage allocation algorithm, and set $\kappa = 10, \tau=0.01$ in the topology update algorithm. These parameters are chosen to guarantee smooth algorithm updates and small MC approximation errors.

\subsection{Convergence in the Static Case}
\label{sec:static}
We first test the convergence of the storage and bandwidth allocation algorithm in the static case, where all peers stay in the system during the entire simulation time and perform no topology update. We first focus on the synchronous case, where peers share a synchronous clock and have an update period of $1$ second. Figure~\ref{fig:static}(a) shows the instantaneous server load versus simulation time. Also shown as for comparison is the system's intrinsic deficit, i.e., total users' streaming rate demand minus total helpers' upload bandwidth. The initial server load is high, but it quickly drops to a stable point. The sub-figures (b) and (c) in Figure~\ref{fig:static} show the convergence of a particular helper's (ID $= 1$) upload rate and storage allocation. The convergence results for the shadow prices are similar, which we omit here due to limit in space.
\begin{table}[htbp]
\caption{Video streaming rate distribution}
\label{tab:dist1}
\vspace{-0.1in}
\begin{center}
\tabcolsep 5.8pt
\small
\begin{tabular}{|c|c|c|c|c|}
  \hline
  \hbox{Streaming rate (kbps)} & 768 & 896 & 896 & 1152 \\
  \hline
  \hbox{Fraction ($\%$)} & 10 & 20 & 50 & 20 \\
  \hline
\end{tabular}
\end{center}
\end{table}
\vspace{-0.25in}
\begin{table}[htbp]
\caption{Helper upload capacity distribution}
\label{tab:dist2}
\vspace{-0.1in}
\begin{center}
\tabcolsep 5.3pt
\small
\begin{tabular}{|c|c|c|c|c|c|c|c|}
  \hline
  \hbox{Upload (kbps)} & 256 & 384 & 512 & 640 & 768 & 896 & 1024 \\
  \hline
  \hbox{Fraction ($\%$)} & 5 & 10 & 15 & 40 & 15 & 10 & 5 \\
  \hline
\end{tabular}
\end{center}
\end{table}
\vspace{-0.25in}
\begin{table}[htbp]
\caption{Helper storage capacity distribution}
\label{tab:dist3}
\vspace{-0.1in}
\begin{center}
\tabcolsep 4.3pt
\small
\begin{tabular}{|c|c|c|c|c|c|c|c|}
  \hline
  \hbox{Storage (MB)} & 768 & 960 & 1152 & 1344 & 1536 & 1728 & 1920 \\
  \hline
  \hbox{Fraction ($\%$)} & 5 & 5 & 10 & 10 & 20 & 40 & 10 \\
  \hline
\end{tabular}
\end{center}
\end{table}
\vspace{-0.1in}
\begin{figure*}
\centering
\subfigure[Server load, sync.]{
\includegraphics[height=1.6in, width=2.2in]{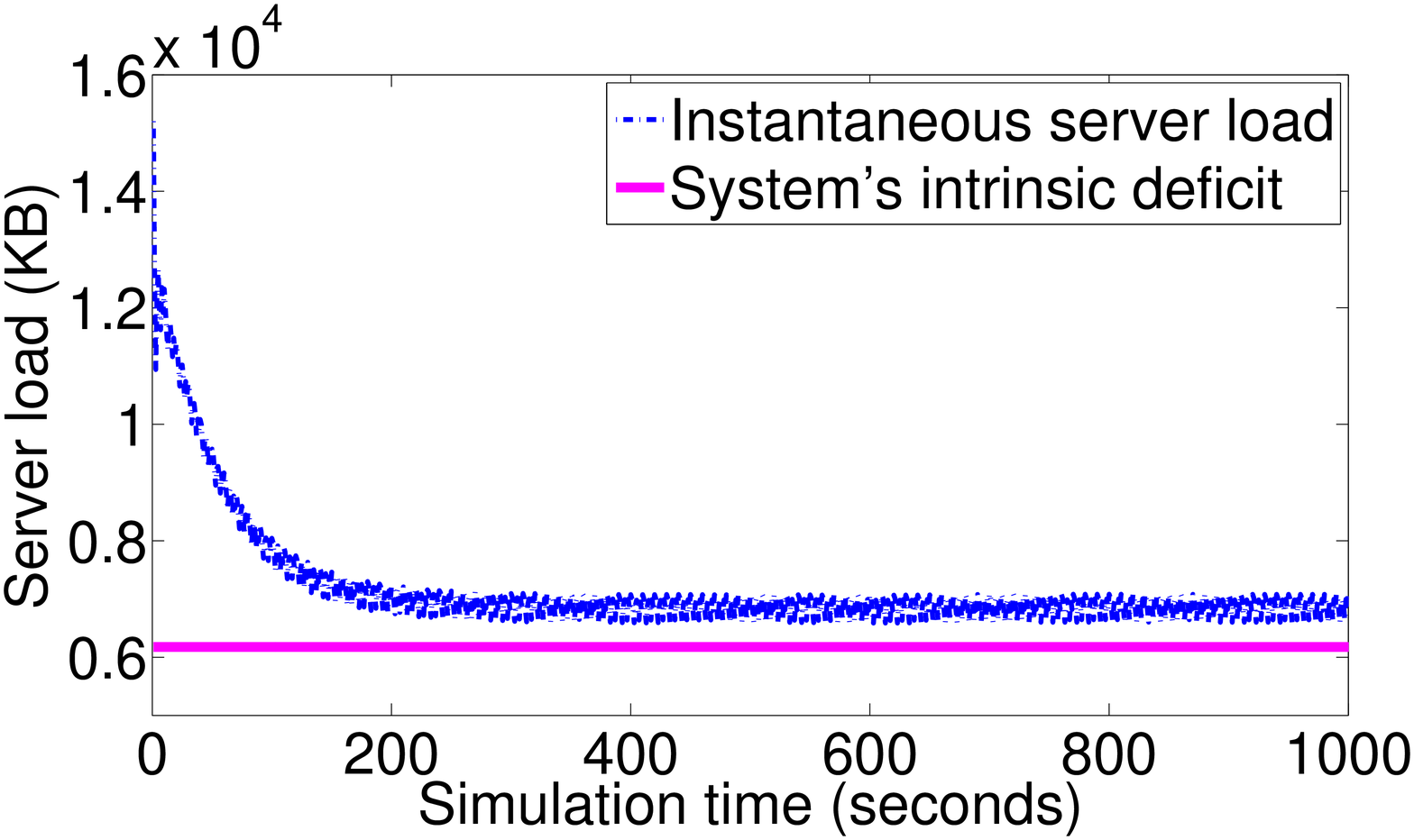}
}
\hspace{-0.6cm}
\subfigure[Bandwidth alloc., sync.]{
\includegraphics[height=1.6in, width=2.2in]{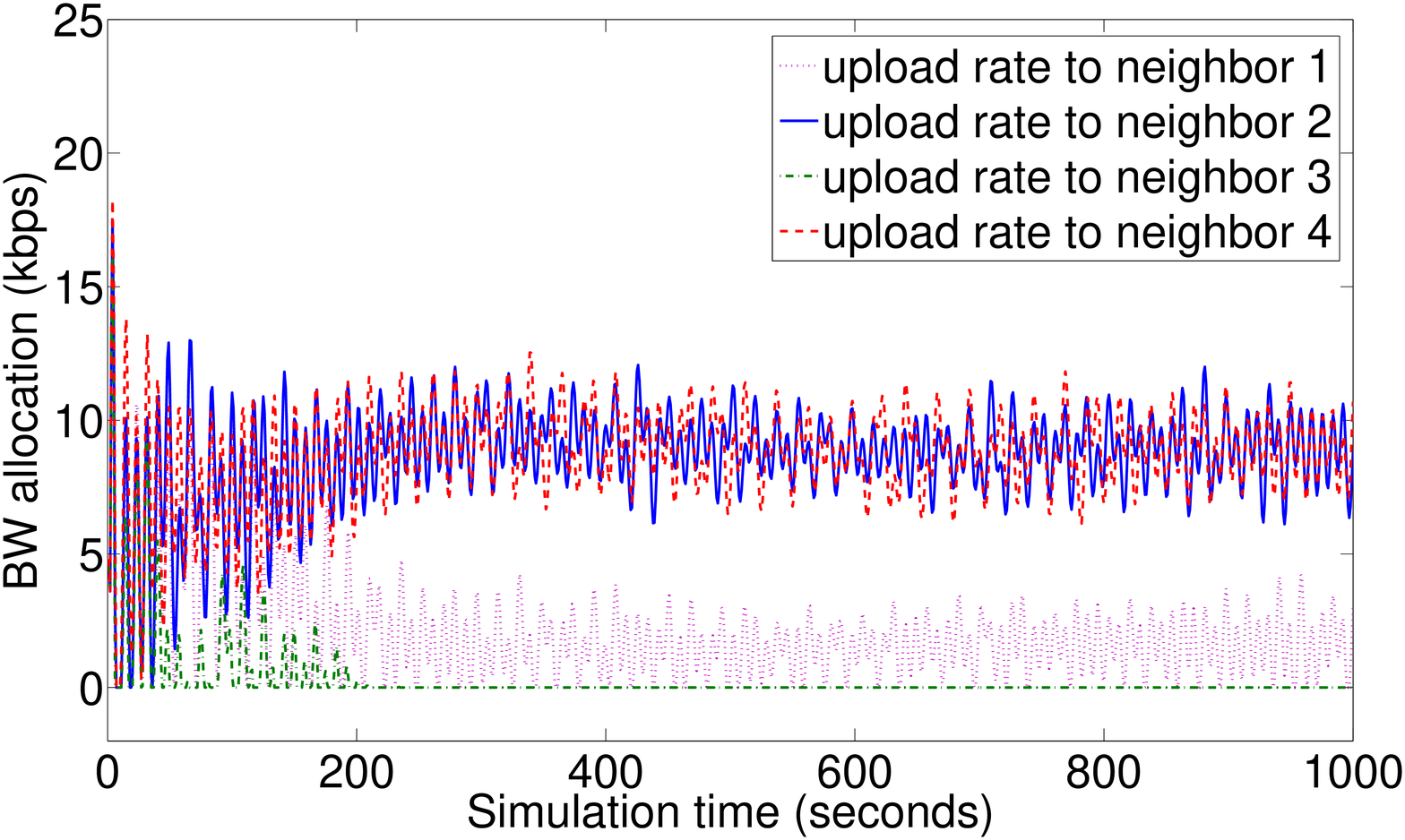}
}
\hspace{-0.6cm}
\subfigure[Storage alloc., sync.]{
\includegraphics[height=1.6in, width=2.2in]{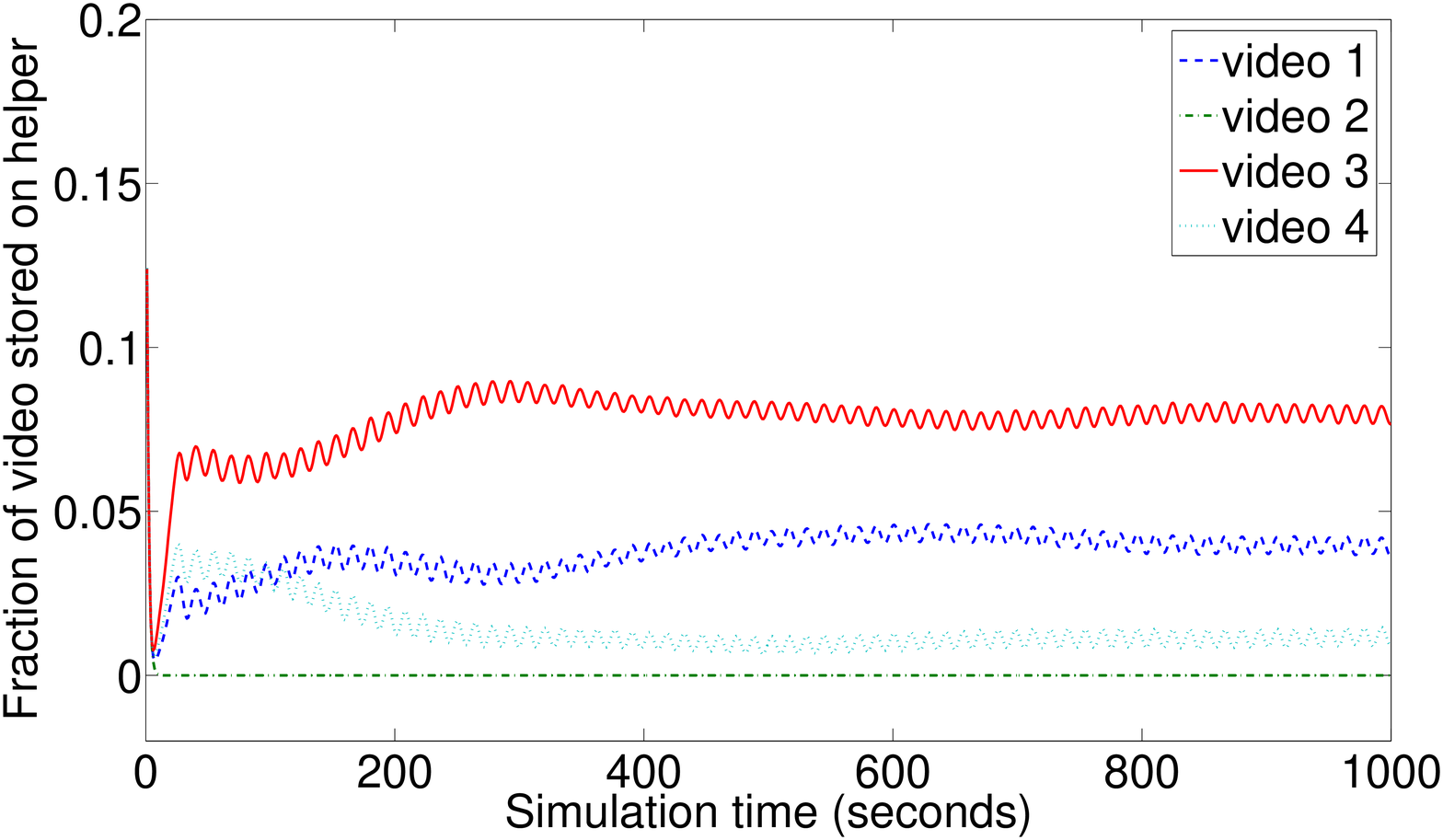}
}
\caption{Convergence results of storage and bandwidth allocation algorithm in a static and synchronous setting, where no overlay topology update nor peer dynamics is present. Peers have synchronized update periods of $1$ seconds. (a), (b) and (c) show the required server load, bandwidth and storage allocation for the helper (ID $=1$).}
\label{fig:static}
\end{figure*}

\subsection{Asynchrony and Random Network Delay}
To test the robustness of the bandwidth and storage allocation algorithm in real networks, we add asynchrony and random network delay in the system. Specifically, peers have asynchronous clocks and choose update periods uniformly randomly from the set of $\{1,3,5,7,9\}$ seconds. In addition, every peer has a communication delay to every one of its neighbors randomly chosen from $1$ to $5$ seconds. These numbers are chosen to stress test the system. Figure~\ref{fig:delay} shows the server load, bandwidth and storage allocation for the same helper (ID $=1$). Compared to Figure~\ref{fig:static}, the bandwidth allocations experiences more fluctuations, but they still center around comparable average values. The server load and helpers' storage load are quite stable, which demonstrates the robustness of the algorithm. In the following sections, our simulation experiments will apply the same asynchrony and random network delays unless mentioned otherwise.
\begin{figure*}
\centering
\subfigure[Server load, async.]{
\includegraphics[height=1.6in, width=2.2in]{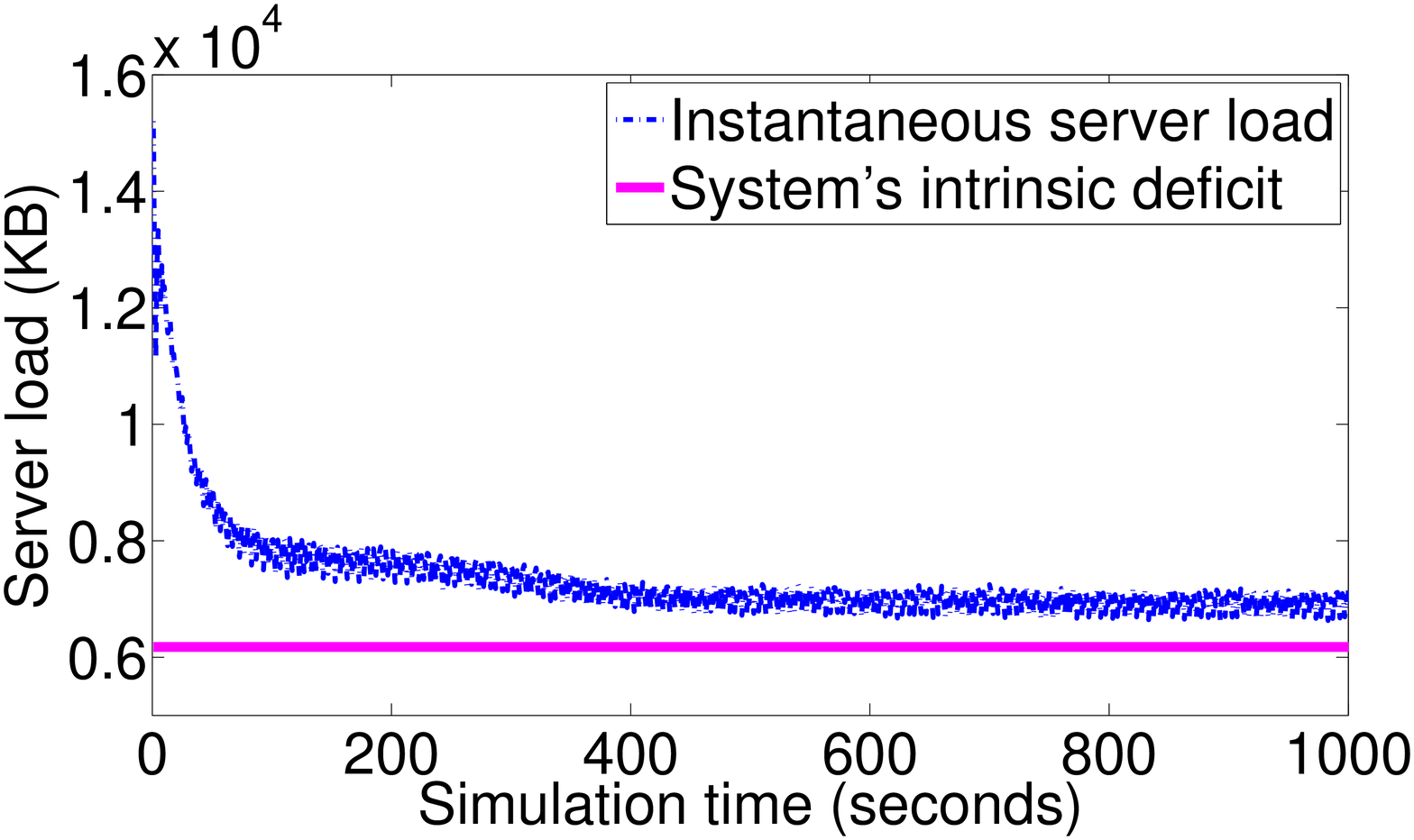}
}
\hspace{-0.6cm}
\subfigure[Bandwidth alloc., async.]{
\includegraphics[height=1.6in, width=2.2in]{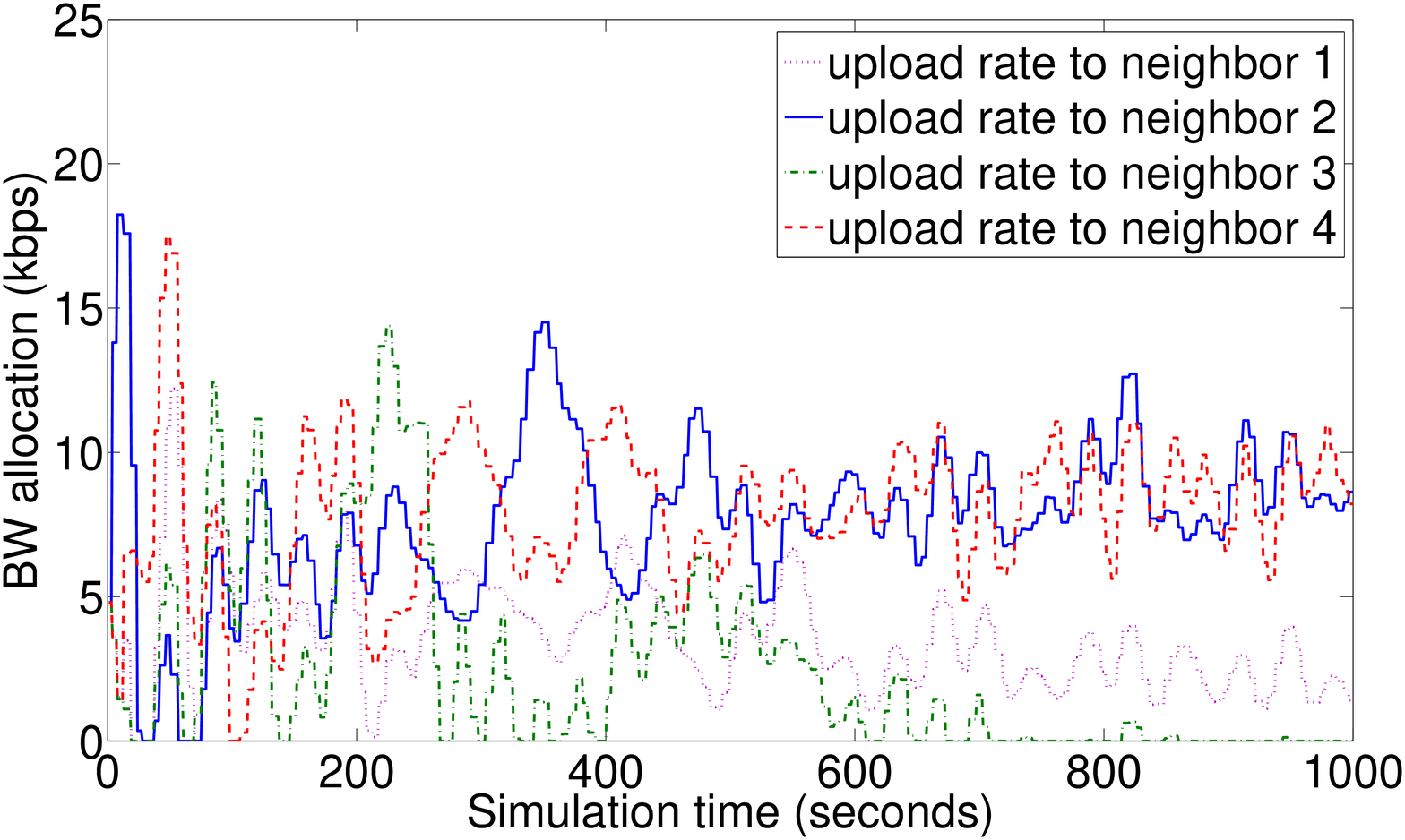}
}
\hspace{-0.6cm}
\subfigure[Storage alloc., async.]{
\includegraphics[height=1.6in, width=2.2in]{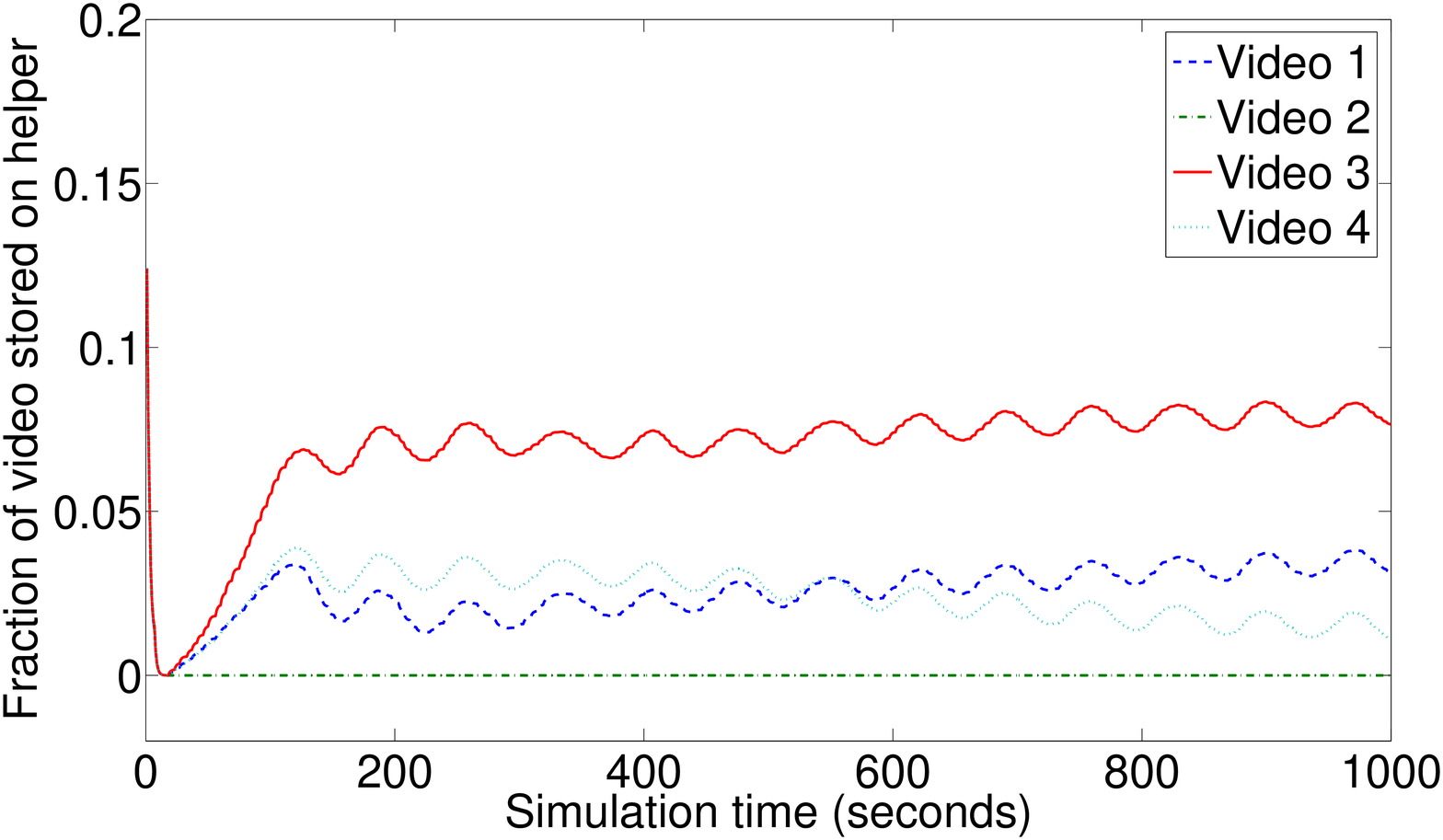}
}
\caption{Convergence of storage and bandwidth allocation algorithm in the case of asynchrony and random network delays, with no overlay topology update nor peer dynamics. Peers have asynchronous clocks and random update periods uniformly drawn from $\{1,3,5,7,9\}$ seconds. Each peer also has a communication delay to his neighbors randomly chosen from $[1,5]$. (a), (b) and (c) show the required server load, bandwidth and storage allocation for the helper (ID$=1$).}
\label{fig:delay}
\end{figure*}

\subsection{Effectiveness of the Overlay Topology Update}
As is evident from Figure~\ref{fig:static}, the server load cannot reach to the minimum value of the intrinsic system deficit without topology update. This is because some helpers have poor performance to their connected neighbors and have not fully utilized their upload bandwidth. We run the distributed overlay topology update algorithm on top of the resource allocation algorithm with other parameters and configurations unchanged. Figure~\ref{fig:change}(a)(b) shows server load versus simulation time without and with overlay topology update respectively, where Figure~\ref{fig:change}(a) is simply Figure~\ref{fig:delay}(a) shown again for comparison. It can be seen that overlay topology update buys approximately $14\%$ reduction in server load and eventually achieves the intrinsic system deficit which is the theoretical lower bound of server load.
\begin{figure}[htbp]
\centering
\subfigure[]{
\includegraphics[height=1.4in, width=1.6in]{fig1.eps}
}
\hspace{-0.6cm}
\subfigure[]{
\includegraphics[height=1.4in, width=1.6in]{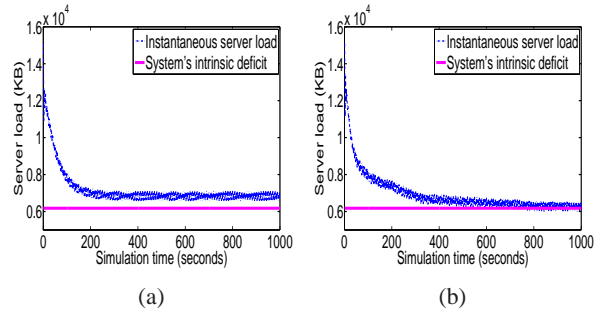}
}
\caption{Server load versus simulation time. Peers have asynchronous update periods and random network delays. (a) Server load without overlay topology update; (b) server load with overlay topology update.}
\label{fig:change}
\end{figure}

\subsection{Effects of Dynamics}
We show in this section that our system is ``plug-and-play", i.e., which requires minimum maintenance and which is automatically adaptive to system dynamics. Peers will only need to run their distributed algorithms regardless of system fluctuations and be able to keep updated to the supply and demand patterns across multiple channels.

We first examine the effects of peer dynamics. To do this, we add new users and new helpers that join the system following a Poisson process with mean $20$. The newly joined peers will follow the demand and resource distributions listed in Tables~\ref{tab:dist1},~\ref{tab:dist2}~and~\ref{tab:dist3}. In addition, every peer will stay in the system for an exponential random amount of time with average of $200$ seconds. To examine how fast the system responds to dynamics, we simulate till $1000$ seconds but stop the dynamic process at the $600$th second. Figure~\ref{fig:dynamic} (a)(b) show how the server load varies with time, without and with overlay topology updates respectively. The available system resources also change due to dynamics, as is evident from the varying intrinsic system deficit shown in the figures. It is demonstrated that the algorithm can keep updated to the dynamics. When overlay topology update is present, the system can also approach the minimum server load. Note that the instantaneous intrinsic system deficit stops at a different value in two cases, only due to the difference in the pseudo-randomness generated by the computer with and without the topology update. The results have demonstrated the robustness of the resource allocation and topology update algorithms to system dynamics.

We also use a simple example to illustrate how the system responds to changes in video demand patterns. In particular, we pick a helper (ID $=4$) who has $10$ neighbor users with $7$ users watching video $3$ and $3$ users watching video $4$. At $t=300$, we let all the users in video $3$ ``switch channels" with half of them switching to video $4$ and the other half switching to video $2$. Figure~\ref{fig:fign} shows how the helper (ID $=4$) responds to such change by re-allocating its storage resources. Both Figure~\ref{fig:fign} and Figure~\ref{fig:static}(c) demonstrate the helper's ``plug-and-play" feature, i.e., helpers can automatically load balance its resources given system demand patterns.
\begin{figure}[htbp]
\centering
\subfigure[]{
\includegraphics[height=1.4in, width=1.6in]{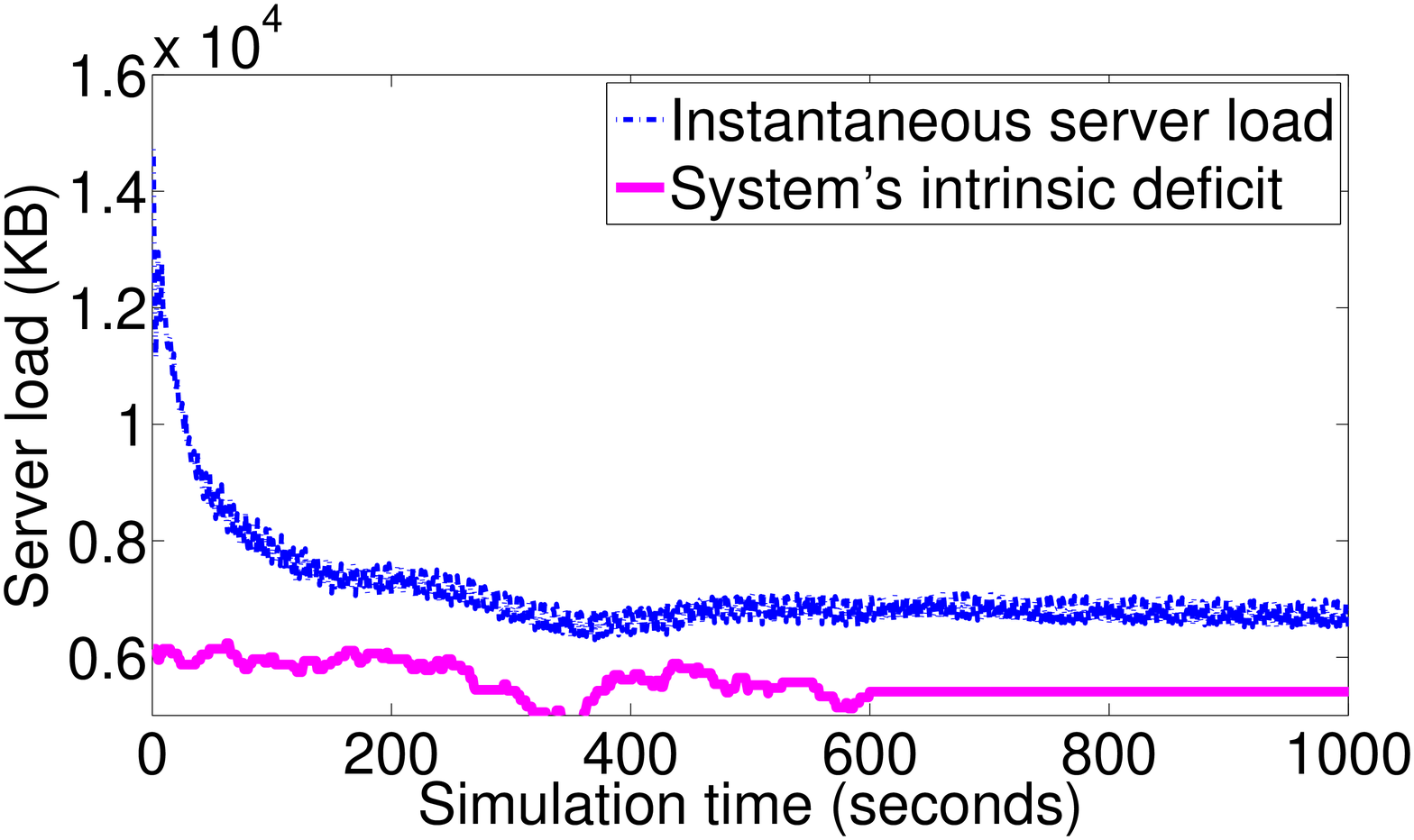}
}
\hspace{-0.6cm}
\subfigure[]{
\includegraphics[height=1.4in, width=1.6in]{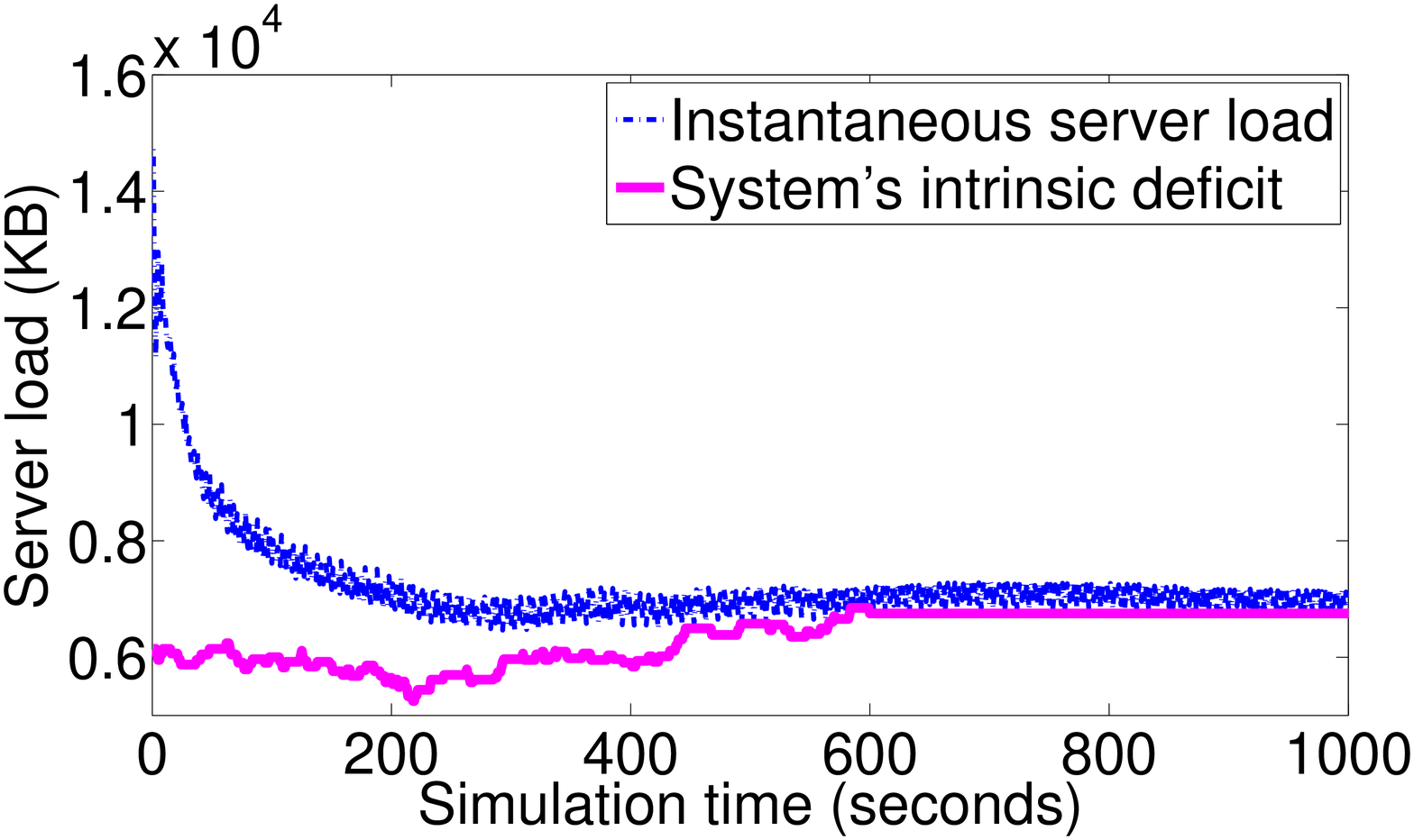}
}
\caption{Effects of system dynamics on server load. A new user and a new helper will join the system every $20$ seconds on average. Each peer stays for an average of $200$ seconds in the system. (a) Server load without overlay topology update; (b) server load with overlay topology update.}
\label{fig:dynamic}
\end{figure}
\begin{figure}[htbp]
\centering
\includegraphics[height=1.2in, width=1.6in]{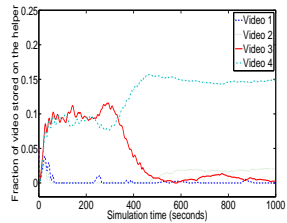}
\caption{Effects of changes in video demand patterns. Users in video session $3$ ``channel switch" to video $2$ and $4$ with equal numbers.}
\label{fig:fign}
\end{figure}

\section{Conclusions}
In this paper, we propose to minimize the server load in a helper-assisted multi-channel P2P VoD system. Helpers who help provide the VoD service are limited in bandwidth and storage, and each helper and user has a constraint on the maximum number of neighbors that they can connect to. This problem is critical for exploring the maximum potential of practical distributed P2P VoD systems. The mix-convex-combinatorial nature of the problem under practical constraints makes it challenging to solve even in a centralized manner. We tackle this challenge by designing two distributed algorithms running in tandem: a primal-dual resource allocation algorithm and a ``soft-worst-neighbor-choking'' topology building algorithm. The overall scheme is simple to implement and provably converges to a near-optimal solution. Simulation results show that our proposed algorithm minimizes the server load, and is robustness to asynchronous clock times, random network delay, video popularity changes and peers dynamics. Our proposed system design and algorithm provide useful insight to practical video content distribution applications. Possible future work includes: (1) design incentive mechanisms into the system; and (2) build a practical system prototype.

\bibliographystyle{IEEEtran}

\begin{thebibliography}{10}

\bibitem{ppstream}
\url{http://www.ppstream.com/}.

\bibitem{pplive}
\url{http://www.pplive.com/}.

\bibitem{xunlei}
\url{http://www.xunlei.com/}.

\bibitem{cisco}
{Cisco Visual Networking Index: Forecast and Methodology, 2008-2013}.
\newblock
  \url{http://www.cisco.com/en/US/solutions/collateral/ns341/ns525/ns537/ns705%
/ns827/white_paper_c11-481360_ns827_Networking_Solutions_White_Paper.html}.

\bibitem{benveniste1990adaptive}
A.~Benveniste, P.~Priouret, and M.~M{\'e}tivier.
\newblock {\em {Adaptive algorithms and stochastic approximations}}.
\newblock 1990.

\bibitem{boufkhad2008achievable}
Y.~Boufkhad, F.~Mathieu, and et al.
\newblock {Achievable catalog size in peer-to-peer video-on-demand systems}.
\newblock In {\em IPTPS}, 2008.

\bibitem{chen2009markov}
M.~Chen, S.~Liew, Z.~Shao, and C.~Kai.
\newblock {Markov approximation for combinatorial network optimization}.
\newblock In {\em Proc. of IEEE INFOCOM}, 2010.

\bibitem{chen2008ump}
M.~Chen, M.~Ponec, S.~Sengupta, J.~Li, and P.~Chou.
\newblock {Utility maximization in peer-to-peer systems}.
\newblock In {\em Proc. ACM SIGMETRICS}, 2008.

\bibitem{hei2007measurement}
X.~Hei, C.~Liang, and et al.
\newblock {A measurement study of a large-scale P2P IPTV system}.
\newblock {\em IEEE Transactions on Multimedia}, 9(8), 2007.

\bibitem{huang2007civ}
C.~Huang, J.~Li, and K.~Ross.
\newblock {Can Internet video-on-Demand be profitable?}
\newblock In {\em Proc. of ACM SIGCOMM}, 2007.

\bibitem{huang2008challenges}
Y.~Huang, T.~Fu, D.~Chiu, J.~Lui, and C.~Huang.
\newblock {Challenges, design and analysis of a large-scale p2p-vod system}.
\newblock 2008.

\bibitem{wang2009}
K.~Wang and C.~Lin.
\newblock {Insight into the P2P-VoD system: performance modeling and analysis}.
\newblock In {\em Proc. of ICCCN}, 2009.

\bibitem{wang08acmmm}
J.~Wang, C.~Huang, and J.~Li.
\newblock {On ISP-friendly rate allocation for peer-assisted VoD}.
\newblock In {\em {Proc. of ACM Multimedia}}, pages 279--288, 2008.

\bibitem{wang2008epp}
J.~Wang and K.~Ramchandran.
\newblock {Enhancing peer-to-peer live multicast quality using helpers}.
\newblock In {\em Proc. of ICIP}, pages 2300--2303, 2008.

\bibitem{wu2008meeting}
C.~Wu and B.~Li.
\newblock {On meeting P2P streaming bandwidth demand with limited supplies}.
\newblock In {\em Proc. of SPIE/ACM International Conference on Multimedia
  Computing and Networking}, 2008.

\bibitem{wu2009queuing}
D.~Wu, Y.~Liu, and K.~Ross.
\newblock {Queuing network models for multi-channel p2p live streaming systems}.
\newblock In {\em Proceedings of IEEE Infocom}, 2009.

\bibitem{he2009improving}
Y.~He, and L.~Guan.
\newblock {Improving the streaming capacity in p2p vod systems with helpers}.
\newblock In {\em Proceedings of the 2009 IEEE international conference on Multimedia and Expo}. Institute of Electrical and Electronics Engineers Inc., 2009.

\bibitem{zhang2009icip}
H.~Zhang, J.~Wang, M.~Chen, and K.~Ramchandran.
\newblock {Scaling peer-to-peer video-on-demand systems using helpers}.
\newblock In {\em ICIP}, 2009.

\bibitem{annapureddy2006providing}
S.~Annapureddy, C.~Gkantsidis, P.~Rodriguez, and L.~Massoulie.
\newblock {Providing video-on-demand using peer-to-peer networks}.
\newblock In {\em Proc. Internet Protocol TeleVision (IPTV) workshop}, 2006.

\bibitem{lasalle1960some}
J.~LaSalle.
\newblock {Some extensions of Liapunov's second method}.
\newblock {\em Circuit Theory, IRE Transactions on}, 7(4):520--527, 1960.

\bibitem{cohen2003incentives}
B.~Cohen.
\newblock {Incentives build robustness in BitTorrent}.
\newblock {Workshop on Economics of Peer-to-Peer systems}, Volum 6, Citeseer, 2003.

\end{thebibliography}

\end{document}